\let\counterwithin\relax
\newcommand{\penv}{{p_\textit{env}}}
\newcommand{\U}{\LTLuntil}
\newcommand{\X}{\LTLnext}
\newcommand{\G}{\LTLglobally}
\newcommand{\F}{\LTLeventually}
\newcommand{\R}{\LTLrelease}
\newcommand{\K}[1]{\mathop{\mathcal{K}_{#1}}}
\newcommand{\nat}{\mathbb{N}}
\newcommand{\pow}[1]{2^{#1}}
\newcommand{\cupdot}{\mathbin{\dot\cup}}
\newcommand{\set}[1]{\{ #1 \}}
\newcommand{\strat}[2]{(2^{#1})^* \rightarrow 2^{#2}}
\newcommand{\ap}{\text{AP}}
\newcommand{\freeap}{\text{AP'}}
\newcommand{\pathvars}{\mathcal{V}}
\newcommand{\pathassign}{\Pi}
\newcommand{\ldot}{\mathpunct{.}}
\newcommand{\dep}[2]{D_{#1 \mapsto #2}}
\newcommand{\collapse}{\mathit{collapse}}
\newcommand{\LTLK}{$\text{LTL}_\mathcal{K}$\xspace}
\newcommand{\HQPTLP}{HyperQPTL\hspace{-1pt}\textsuperscript{\tiny\textbf +}\xspace}
\newcommand{\zthree}{\textsc{z3}}
\definecolor{green}{RGB}{48, 107, 52}
\definecolor{red}{RGB}{181,23,23}
\begin{document}
\title{Realizing $\omega$-regular Hyperproperties\thanks{This work was partially supported by the Collaborative Research Center ``Foundations of Perspicuous Software Systems'' (TRR 248, 389792660) and by the European Research Council (ERC) Grant OSARES (No. 683300).}}
%
%
\author{Bernd Finkbeiner\orcidID{0000-0002-4280-8441}\and
Christopher Hahn\orcidID{0000-0002-1243-4880}\and
Jana Hofmann\orcidID{0000-0003-1660-2949}\and
Leander Tentrup\orcidID{0000-0002-6150-2982}}


\institute{Reactive Systems Group\\ Saarland University \\
\email{\{finkbeiner, hahn, hofmann, tentrup\}@react.uni-saarland.de}}
\SetWatermarkText{\hspace*{5.5in}\raisebox{5.5in}{\includegraphics[scale=0.1]{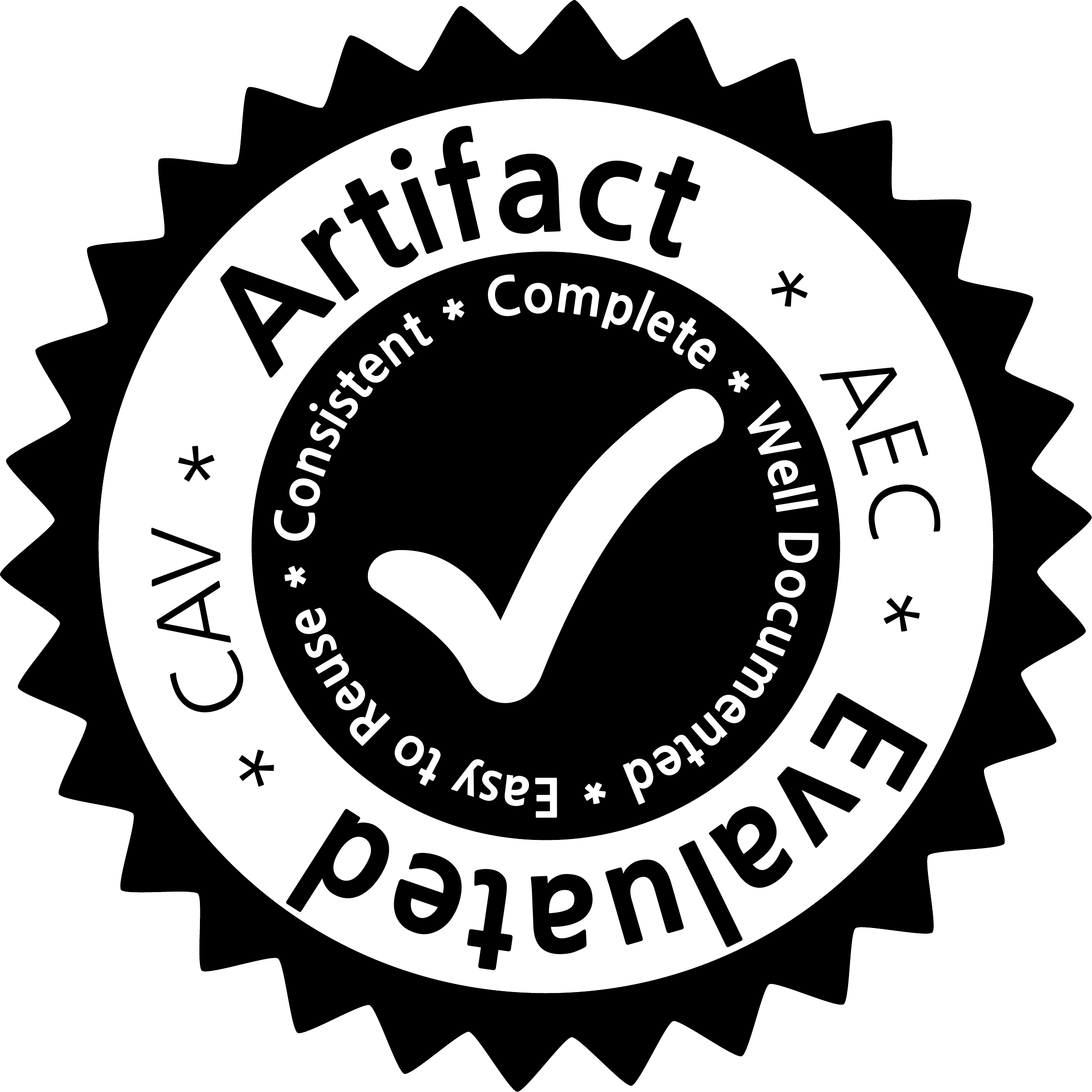}}}
\SetWatermarkAngle{0}
\maketitle              
\begin{abstract}
We study the expressiveness and reactive synthesis problem of HyperQPTL, a logic that specifies $\omega$-regular hyperproperties.
HyperQPTL is an extension of linear-time temporal logic (LTL) with explicit trace and propositional quantification and therefore \emph{truly} combines trace relations and $\omega$-regularity.
As such, HyperQPTL can express promptness, which states that there is a common bound on the number of steps up to which an event must have happened.
We demonstrate how the HyperQPTL formulation of promptness differs from the type of promptness expressible in the logic Prompt-LTL.
Furthermore, we study the realizability problem of HyperQPTL by identifying decidable fragments, where one decidable fragment contains formulas for promptness. We show that, in contrast to the satisfiability problem of HyperQPTL, propositional quantification has an immediate impact on the decidability of the realizability problem.
We present a reduction to the realizability problem of HyperLTL, which immediately yields a bounded synthesis procedure.
We implemented the synthesis procedure for HyperQPTL in the bounded synthesis tool BoSy. Our experimental results show that a range of arbiter satisfying promptness can be synthesized.

\end{abstract}

\section{Introduction}
\label{sec:intro}
	Hyperproperties~\cite{journals/jcs/ClarksonS10}, which are mainly studied in the area of secure information flow control, are a generalization from trace properties to \emph{sets} of trace properties. That is, they relate multiple execution traces with each other. 
	Examples are noninterference~\cite{conf/sp/GoguenM82a}, observational determinism~\cite{conf/csfw/ZdancewicM03}, symmetry~\cite{conf/cav/FinkbeinerRS15}, or promptness~\cite{journals/fmsd/KupfermanPV09}, i.e., properties whose satisfaction cannot be determined by analyzing each execution trace in isolation.
	
	A number of logics have been introduced to express hyperproperties (examples are~\cite{conf/post/ClarksonFKMRS14, conf/stacs/Finkbeiner017, conf/memocode/NguyenKJDJ17}). They either add explicit trace quantification to a temporal logic or build on monadic first-order or second-order logics and add an equal-level predicate, which connects traces with each other.	
	A comprehensive study comparing such hyperlogics has been initiated in~\cite{hierarchy_hyperlogics}.
	
	The most prominent hyperlogic is HyperLTL~\cite{conf/post/ClarksonFKMRS14}, which extends classic linear-time temporal logic (LTL)~\cite{conf/focs/Pnueli77} with trace variables and explicit trace quantification.
	HyperLTL has been successfully applied in (runtime) verification, (e. g.,~\cite{DBLP:conf/fm/StuckiSSB19, DBLP:conf/cav/FinkbeinerHT18,DBLP:conf/rv/Hahn19}), specification analysis~\cite{DBLP:conf/cav/FinkbeinerHS17,DBLP:conf/atva/FinkbeinerHH18}, synthesis~\cite{conf/cav/FinkbeinerHLST18,DBLP:journals/acta/FinkbeinerHLST20}, and program repair~\cite{DBLP:conf/atva/BonakdarpourF19} of hyperproperties.
	As an example specification, the following HyperLTL formula expresses observational determinism by stating that for every pair of traces, if the observable inputs $I$ are the same on both traces, then also the observable outputs $O$ have to agree
	\begin{equation}
	\forall \pi \forall \pi' \ldot \G (I_\pi = I_{\pi'}) \rightarrow \G (O_\pi = O_{\pi'}) \enspace.
	\end{equation}
	
	Thus, hyperlogics can not only specify functional correctness, but may also enforce the absence of information leaks or presence of information propagation. 
	There is a great practical interest in information flow control, which makes synthesizing implementations that satisfy hyperproperties highly desirable.
	Recently~\cite{conf/cav/FinkbeinerHLST18}, it was shown that the synthesis problem of HyperLTL, although undecidable in general, remains decidable for many fragments, such as the $\exists^*\forall$ fragment. Furthermore, a \emph{bounded synthesis} procedure was developed, for which a prototype implementation based on BoSy\cite{conf/cav/FaymonvilleFT17,conf/cav/FinkbeinerHLST18,conf/cav/CoenenFST19} showed promising results.

	HyperLTL is, however, intrinsically limited in expressiveness. 
	For example, promptness is not expressible in HyperLTL. Promptness is a property stating that there is a bound $b$, common for all traces, on the number of steps up to which an event $e$ must have happened.
	Additionally, just like LTL, HyperLTL can express neither $\omega$-regular nor epistemic properties~\cite{MarkusThesis, conf/fossacs/BozzelliMP15}.
	Epistemic properties are statements about the transfer of knowledge between several components. An exemplary epistemic specification is described by the \emph{dining cryptographers problem}~\cite{journals/cacm/Chaum85}: three cryptographers sit at a table in a restaurant. Either one of the cryptographers or, alternatively, the NSA must pay for their meal.
	The question is whether there is a protocol where each cryptographer can find out whether the NSA or one of the cryptographers paid the bill, without revealing the identity of the paying cryptographer.
	
	In this paper, we explore HyperQPTL~\cite{MarkusThesis, hierarchy_hyperlogics}, a hyperlogic that is more expressive than HyperLTL. Specifically, we study its expressiveness and reactive synthesis problem. HyperQPTL extends HyperLTL with quantification over sequences of new propositions. What makes the logic particularly expressive is the fact that the trace quantifiers and propositional quantifiers can be freely interleaved. With this mechanism, HyperQPTL can not only express all $\omega$-regular properties over a sequences of n-tuples; it truly interweaves trace quantification and $\omega$-regularity.
	For example, promptness can be stated as the following HyperQPTL formula:
	\begin{equation}
	\label{form:promptness}
	\exists b. \forall \pi.~ \F b \land (\neg b ~ \U e_\pi) \enspace .	
	\end{equation}
	The formula states that there exists a sequence $s \in (\pow{\set{q}})^\omega$, such that event $e$ holds on all traces before the first occurrence of $b$ in $s$. In this paper, we argue that the type of promptness expressible in HyperQPTL is incomparable to the expressiveness of Prompt-LTL~\cite{journals/fmsd/KupfermanPV09}, a logic introduced to express promptness properties. It is further known that HyperQPTL also subsumes epistemic extensions of temporal logics such as \LTLK~\cite{journals/jcss/HalpernV89}, as well as the first-order hyperlogic FO[$<, E$]~\cite{conf/stacs/Finkbeiner017, MarkusThesis, hierarchy_hyperlogics}. Its expressiveness makes HyperQPTL particularly interesting.
	The model checking problem of HyperQPTL is, despite the logic being quite expressive, decidable~\cite{MarkusThesis}. We also explore an alternative definition of HyperQPTL that would result in an even more expressive logic. However, we show that the logic would have an undecidable model checking problem, which constitutes a major drawback in the context of computer-aided verification.
	Furthermore, satisfiability is decidable for large fragments of the logic~\cite{hierarchy_hyperlogics}.
	Decidable HyperQPTL fragments can be described solely in terms of their \emph{trace} quantifier prefix.
	This indicates that propositional quantification has no negative impact on the decidability, although it greatly increases the expressiveness.
	We establish that propositional quantification, in contrast to the satisfiability problem, has an impact on the realizability problem:
	it becomes undecidable when combining a propositional $\forall \exists$ quantifier alternation with a single universal trace quantifier.
	However, we show that the synthesis problem of large HyperQPTL fragments remains decidable, where one of these fragments contains promptness properties.
	We partially obtain these results by reducing the HyperQPTL realizability problem to the HyperLTL realizability problem. Based on this reduction, we extended the BoSy bounded synthesis tool to also synthesize systems respecting HyperQPTL specifications.
	We provide promising experimental results of our prototype implementation: using BoSy and HyperQPTL specifications, we were able to synthesize arbiters that respect promptness.
	
		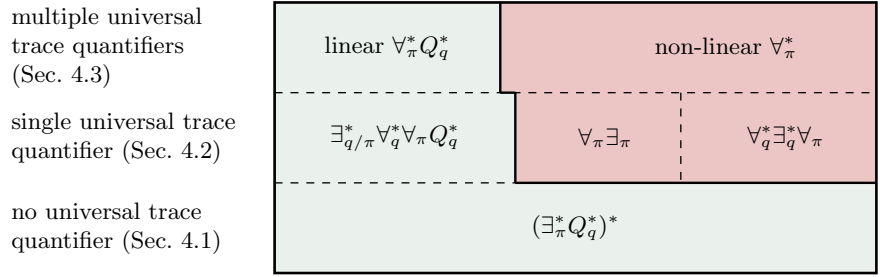
\begin{figure}[t]
		\centering
		\pgfdeclarelayer{background}
		\pgfdeclarelayer{foreground}
		\pgfsetlayers{background,main,foreground}
		\begin{tikzpicture}
		\filldraw[green!10] (4,1.8) rectangle (-4,-1.8) {};
		\filldraw[red!25] (4,1.8) rectangle (-1,.6) {};
		\filldraw[red!25] (4,1.8) rectangle (-.8,-.6) {};
		
		\node[draw=black, rectangle, line width=1pt, color=black, minimum width=8cm,minimum height=3.6cm] at (0,0) {};
		
		\draw[dashed, line width=.5pt] (-4,-.6) -- (4,-.6);
		\draw[dashed, line width=.5pt] (-4,0.6) -- (4,.6);
		\draw[dashed, line width=.5pt] (1.4, 0.6) -- (1.4, -0.6);
		
		\draw[line width=.9pt] (-1, 1.8) -- (-1, 0.6);
		\draw[line width=.9pt] (-.8, 0.6) -- (-.8, -0.6);
		\draw[line width=.9pt] (-.8,-.6) -- (4,-.6);
		\draw[line width=.9pt] (-1,.6) -- (-.8,.6);
		
		\node at (0, -1.2) {$(\exists_{\pi}^* Q_q^*)^*$};
		
		\node at (-2.4, 0) {$\exists_{q/\pi}^*\forall_q^* \forall_{\pi} Q_q^*$};
		\node at (.4, 0) {$\forall_\pi \exists_\pi$};
		\node at (2.8, 0) {$\forall_q^*\exists_q^*\forall_\pi$};
		
		\node at (-2.5, 1.2) {linear $\forall_{\pi}^*Q_q^*$};		
		\node at (2, 1.2) {non-linear $\forall_{\pi}^*$};
		
		\node[text width=3cm] at (-6, 1.2) {multiple universal trace quantifiers (Sec.~\ref{sec:MultUnivTrace})};
		\node[text width=3cm] at (-6, 0) {single universal trace quantifier (Sec.~\ref{sec:SingleUnivTrace})};
		\node[text width=3cm] at (-6, -1.2) {no universal trace quantifier (Sec.~\ref{sec:NoUnivTrace})};

		\end{tikzpicture}
		\caption{The realizability problem of HyperQPTL. Left and below of the solid line are the decidable fragments, right above the solid line the undecidable fragments.}
		\label{fig:fragments}
	\end{figure}
	
	This paper is structured as follows.
	In Section~\ref{sec:prelims}, we give necessary preliminaries.
	In Section~\ref{sec:omega-regular-hyperproperties}, we define HyperQPTL. We discuss an alternative approach to define a logic expressing $\omega$-regular hyperproperties, before pointing out that its model checking problem is undecidable. Subsequently, we give examples for the expressiveness of HyperQPTL, namely by characterizing the type of promptness properties HyperQPTL can express. Additionally, we recapitulate how HyperQPTL also subsumes epistemic properties.
	Section~\ref{sec:HyperQPTL_realizability} discusses the realizability problem of HyperQPTL. We describe HyperQPTL fragments in terms of their quantifier prefixes. To present our results, we use the following notation. We write $\forall_\pi$ and $\forall_q$ for a single universal trace and propositional quantifier, respectively. To denote a sequence of universal trace and propositional quantifiers, we write $\forall_\pi^*$ and $\forall_q^*$. Furthermore, we use $\forall_{\pi/q}^*$ for a sequence of mixed universal quantification. We use the analogous notation for existential quantifiers. Lastly, $Q_\pi^*$ and $Q_q^*$ denote a sequence of mixed universal and existential trace and propositional quantifiers, respectively. As an example, the $\forall_\pi^* Q_q^*$ fragment denotes all formulas of the form $\forall \pi_1 \ldot \ldots \forall \pi_m \ldot \exists/\forall q_1 \ldot \ldots \exists/\forall q_n \ldot \varphi$, where $\varphi$ is quantifier free.
	Figure~\ref{fig:fragments} summarizes our results. We establish that a major factor for the decidability of the realizability problem consists in the number of universal trace occurring in a formula. Realizability of HyperQPTL formulas without $\forall \pi$ quantifiers is decidable (Section~\ref{sec:NoUnivTrace}). Formulas with a single $\forall \pi$ are decidable if they belong to the $\exists_{q/\pi}^*\forall_q^* \forall_{\pi} Q_q^*$ fragment. This fragment also contains promptness. For more than one universal trace quantifier, we show that decidability can be guaranteed for a fragment that we call the linear $\forall_{\pi}^*Q_q^*$ fragment. We also show that all the above fragments are tight, i.e., realizability of all other formulas is in general undecidable.
	Lastly, Section~\ref{sec:Experiments} presents experiments for the prototype implementation of our bounded synthesis algorithm for HyperQPTL.

\section{Preliminaries}
\label{sec:prelims}

We use $\ap$ for a set of atomic propositions. A \emph{trace} over $\ap$ is an infinite sequence $t \in (\pow{\ap})^\omega$. For $i \in \nat$, we write $t[i]$ for the $i$th element of $t$ and $t[i, \infty]$ for the suffix of t starting from position $i$.
For two traces $t, t'$ over $\ap$ and a set $\freeap \subseteq \ap$, we write $t =_\freeap t'$ to indicate that $t$ and $t'$ agree on all $a \in \freeap$, and respectively $T =_\freeap T'$ for two sets of traces $T$ and $T'$.
Furthermore, we define a replacement function $t [q \mapsto t_q]$ that given a trace $t$ and a trace $t_q \in (\pow{\{q\}})^\omega$, replaces the occurrences of $q$ in $t$ according to $t_q$, such that $t[q \mapsto t_q] =_{\{q\}} t_q$ and $t[q \mapsto t_q] =_{\ap \backslash \set{q}} t$. We also lift this notation to sets of traces and define $T[q \mapsto t_q] = \set{t[q \mapsto t_q]  \mid t \in T}$.

	QPTL~\cite{QPTL} extends Linear Temporal Logic (LTL) with quantification over propositions. QPTL formulas $\varphi$ are defined as follows.
	\begin{align*}
	\varphi &\Coloneqq \exists q \ldot \varphi \mid \forall q \ldot \varphi \mid \psi \\
	\psi &\Coloneqq q \mid \neg \psi \mid \psi \lor \psi \mid \X \psi \mid \F \psi
	\end{align*}
	where $q \in \ap$ and $\ap$ is a set of atomic propositions.
	For simplicity, we assume that variable names in formulas are cleared of double occurrences.
	The semantics of $\varphi$ over $\ap$ is defined with respect to a trace $t \in (\pow{\ap})^\omega$.
	\begin{alignat*}{3}
		t &\models q                            &&\text{ iff } && q \in t[0] \\
		t &\models \neg \psi                  &&\text{ iff } && t \not\models  \psi \\
		t &\models \psi_1 \vee \psi_2      \quad &&\text{ iff } \quad &&t \models  \psi_1 \text{ or } t \models  \psi_2 \\
		t &\models \LTLnext \psi            &&\text{ iff } && t[1,\infty] \models  \psi \\
		t &\models \F \psi &&\text{ iff } && \exists i \geq 0.~ t[i,\infty] \models  \psi \\
		t &\models \exists q \ldot \varphi &&\text{ iff } && \exists t_q \in  (\pow{\{q\}} )^\omega \ldot t[q \mapsto t_q] \models \varphi  \\
		t &\models \forall q \ldot \varphi &&\text{ iff } && \forall t_q \in  (\pow{\{q\}} )^\omega \ldot t[q \mapsto t_q] \models \varphi  
	\end{alignat*}
	We did not define the until operator $\LTLuntil$ as native part of the logic. It can be derived using propositional quantification~\cite{kaivola1997QPTLUntil}. The boolean connectives $\land, \rightarrow, \leftrightarrow$ and the temporal operators globally $\G$ and release $\LTLrelease$ are derived as usually.

\section{$\omega$-Regular Hyperproperties}
\label{sec:omega-regular-hyperproperties}
	Just like LTL, HyperLTL cannot express $\omega$-regular languages~\cite{MarkusThesis}. LTL can be extended to QPTL by adding quantification over atomic propositions. In QPTL, $\omega$-regular languages become expressible. We therefore study HyperQPTL~\cite{MarkusThesis, hierarchy_hyperlogics}, the extension of HyperLTL with propositional quantification, to express $\omega$-regular hyperproperties. 
	Given a set $\ap$ of atomic propositions and a set $\pathvars$ of trace variables, the syntax of HyperQPTL is defined as follows
	\begin{align*}
	\varphi &{}\Coloneqq \forall\pi\ldot\varphi \mid \exists\pi\ldot\varphi \mid \forall q \ldot\varphi \mid \exists q \ldot\varphi \mid\psi \enspace \\
	\psi &{}\Coloneqq a_\pi \mid q \mid \neg\psi \mid \psi\lor\psi \mid \X\psi \mid \F\psi \enspace ,
	\end{align*}
	where $a, q \in \ap$ and $\pi \in \pathvars$.
	As for QPTL, we assume that formulas are cleared of double occurrences of variable names.
	We require that in well-defined HyperQPTL formulas, each $a_\pi$ is in the scope of a trace quantifier binding $\pi$ and each $q$ is in the scope of a propositional quantifier binding $q$.
	Note that atomic propositions $a_\pi$ refer to a quantified trace $\pi$, whereas quantified propositional variables $q$ are independent of the traces.
	The semantics of a well-defined HyperQPTL formula over $\ap$ is defined with respect to a set of traces $T \subseteq (2^\ap)^\omega$ and an assignment function $\pathassign : \pathvars \to T$. We define the satisfaction relation $\Pi, i \models_T \varphi$ as follows:
	\begin{alignat*}{3}
	&\pathassign,i \models_T a_\pi       &&\text{ iff } &&a \in \pathassign(\pi)[i] \\
	&\pathassign,i \models_T q       &&\text{ iff } && \forall t \in T \ldot q \in t[i] \\
	&\pathassign,i \models_T \neg \psi              && \text{ iff } &&\pathassign,i \not\models_T \psi \\
	&\pathassign,i \models_T \psi_1 \lor \psi_2        \quad && \text{ iff } \quad &&\pathassign,i \models_T \psi_1 \lor \pathassign,i \models_T \psi_2 \\
	&\pathassign,i \models_T \X \psi                && \text{ iff } &&\pathassign,i+1 \models_T \psi \\
	&\pathassign,i \models_T \F \psi             && \text{ iff } &&\exists j \geq i \ldot ~ \pathassign \ldot j \models_T \psi \\
	&\pathassign,i \models_T \exists \pi \ldot \varphi && \text{ iff } && \exists t \in T \ldot \pathassign[\pi \mapsto t], i \models_T \varphi\\
	&\pathassign,i \models_T \forall \pi \ldot \varphi && \text{ iff } && \forall t \in T \ldot \pathassign[\pi \mapsto t], i \models_T \varphi \\
	&\pathassign,i \models_T \exists q \ldot \varphi && \text{ iff } && \exists t_q \in (2^{\{q\}})^\omega \ldot \pathassign, i \models_{T[q \mapsto t_q]} \varphi\\
	&\pathassign,i \models_T \forall q \ldot \varphi && \text{ iff } && \forall t_q \in (2^{\{q\}})^\omega \ldot \pathassign, i \models_{T[q \mapsto t_q]} \varphi \enspace .
	\end{alignat*}
	Note that the semantics of propositional quantification is defined in such a way that in the scope of a quantifier binding $q$, all traces agree on their $q$-sequence.
	We say that a set of traces $T$ satisfies a HyperQPTL formula $\varphi$ if $\emptyset, 0 \models_T \varphi$, where $\emptyset$ is the empty trace assignment.
	QPTL formulas can be expressed in HyperQPTL using a single universal trace quantifier. Furthermore, HyperLTL~\cite{conf/post/ClarksonFKMRS14} is the syntactic subset of HyperQPTL that does not contain propositional quantification.
	
	While HyperQPTL can express a wide range of properties (see Section~\ref{sec:Expressiveness_HyperQPTL}), its model checking problem is still decidable~\cite{MarkusThesis}. Furthermore, the syntactic fragments for which satisfiability is decidable can be expressed solely in terms of the occurring trace quantifiers: Just like for HyperLTL, satisfiability of a HyperQPTL formula is decidable if no $\forall \pi$ is followed by an $\exists \pi$~\cite{hierarchy_hyperlogics}.
	
	The definition of HyperQPTL is straightforward, however, one could argue that it is not the only way to extend QPTL to a hyperlogic. The original idea of QPTL is to ``color'' the trace by introducing additional atomic propositions. The way HyperQPTL is defined, that idea is translated to sets of traces by coloring the traces uniformly. An alternative approach could be to color every trace individually by introducing a full atomic proposition for every propositional quantification. This resembles full second-order quantification and would therefore result in a considerably more expressive logic. In particular, we show that the model checking problem would become undecidable, which is, especially in the context of automatic verification, unfavorable. For the remainder of this section, we call the logic resulting from the alternative definition \HQPTLP.
	The syntax of \HQPTLP is similar to the one of HyperQPTL, just without the rule $q$ for the evaluation of the propositional variables. This accounts for the idea that the propositional quantification can freely reassign atomic propositions; thus, there is no need to distinguish between free atomic propositions and quantified atomic propositions: 
		\begin{align*}
			\varphi &{}\Coloneqq \forall\pi\ldot\varphi \mid \exists\pi\ldot\varphi \mid \forall a \ldot\varphi \mid \exists a \ldot\varphi \mid\psi \enspace \\
			\psi &{}\Coloneqq a_\pi \mid \neg\psi \mid \psi\lor\psi \mid \X\psi \mid \F\psi \enspace .
		\end{align*}
		Semantically, only the rules for the quantification of the propositional quantifiers change:
		\begin{alignat*}{3}
		&\pathassign,i \models_T \exists a \ldot \varphi \quad && \text{ iff } \quad && \exists T' \subseteq (\pow{\ap})^\omega \ldot T' =_{\ap \backslash \{a\}} T \land \pathassign, i \models_{T'} \varphi\\
		&\pathassign,i \models_T \forall a \ldot \varphi && \text{ iff } && \forall T' \subseteq (\pow{\ap})^\omega \ldot T' =_{\ap \backslash \{a\}} T \rightarrow \pathassign, i \models_{T'} \varphi \enspace .
		\end{alignat*}
	\begin{lemma}
		The \HQPTLP model checking problem is undecidable.
	\end{lemma}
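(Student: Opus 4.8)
The plan is to reduce the satisfiability problem of HyperLTL, which is known to be undecidable~\cite{hierarchy_hyperlogics}, to the \HQPTLP model checking problem. The guiding intuition is that the second-order propositional quantifier of \HQPTLP no longer recolors the model's traces uniformly: by its semantics, $\exists a$ replaces the current trace set $T$ by \emph{any} $T'$ that agrees with $T$ on the projection to $\ap \setminus \set{a}$. Evaluated over a model that already produces every trace, $\exists a$ therefore ranges over essentially arbitrary trace sets, namely the sub-collection it marks with $a$. Model checking on such a universal model can thus internalize the ``there exists a set of traces'' that underlies the satisfiability question.

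Concretely, given a HyperLTL formula $\psi$ over $\ap$, I first fix a fresh proposition $s \notin \ap$ and take as the system the universal model $K$ over $\ap \cup \set{s}$, i.e. the finite Kripke structure whose set of traces is $T_0 = (\pow{\ap \cup \set{s}})^\omega$. Next I form the \emph{relativization} $\psi^s$ of $\psi$, which guards every trace quantifier by $s$: each subformula $\exists \pi \ldot \chi$ becomes $\exists \pi \ldot \G s_\pi \land \chi$ and each $\forall \pi \ldot \chi$ becomes $\forall \pi \ldot \G s_\pi \rightarrow \chi$. I then model check the \HQPTLP formula
\[
\Phi \coloneqq \exists s \ldot \bigl( (\exists \pi \ldot \G s_\pi) \land \psi^s \bigr) .
\]
This $\Phi$ is a well-defined \HQPTLP formula, since its matrix uses only the per-trace propositions $a_\pi$ with $a \in \ap$ and $s_\pi$, and the mapping $\psi \mapsto (K, \Phi)$ is plainly computable.

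For correctness I would argue that evaluating $\exists s$ on $T_0$ amounts to choosing any trace set $T'$ over $\ap \cup \set{s}$ whose $\ap$-projection equals that of $T_0$, i.e. all of $(\pow{\ap})^\omega$. The guard $\G s_\pi$ restricts the quantifiers of $\psi^s$ to exactly those traces of $T'$ on which $s$ holds globally; let $S \subseteq (\pow{\ap})^\omega$ be their $\ap$-projections. Since these globally-$s$ traces are in bijection with $S$ and the matrix of $\psi$ never reads $s$, the formula $\psi^s$ holds over $T'$ iff $S \models \psi$, while $\exists \pi \ldot \G s_\pi$ forces $S \neq \emptyset$. Hence $K \models \Phi$ iff some nonempty $S$ satisfies $\psi$, i.e. iff $\psi$ is satisfiable; undecidability of \HQPTLP model checking follows.

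The step that requires the most care is establishing that $\exists s$ can realize precisely the intended sub-collections. For the backward direction this is immediate, but for the forward direction, given a satisfying $S$, one must exhibit a concrete witness $T'$: take the traces obtained from each $u \in S$ by letting $s$ hold at every position, together with the traces obtained from each $u \in (\pow{\ap})^\omega$ by letting $s$ hold at no position. The second, globally-$s$-free family is exactly what keeps the $\ap$-projection of $T'$ equal to that of $T_0$ — the projection-equality constraint built into the second-order semantics — while contributing nothing to $S$. Carrying out this projection bookkeeping, and checking that relativization transfers the HyperLTL semantics faithfully (traces with non-global $s$ being simply ignored by the guards), is the crux of the argument.
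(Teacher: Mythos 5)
Your proof is correct, but it takes a genuinely different route from the paper's. The paper proves the lemma by a direct reduction from the halting problem of 2-counter machines, following the undecidability proof for S1S[$E$] model checking: the Kripke structure generates all traces on which each of $i, c_1, c_2$ holds exactly once, so that a trace encodes a machine configuration, and a single existentially quantified proposition $q$ carves out a finite, predecessor-closed subset of traces containing the initial and a halting configuration, i.e., a halting computation. You instead reduce from HyperLTL satisfiability (undecidable already for $\forall\exists$ formulas~\cite{conf/concur/FinkbeinerH16}, a result the paper itself cites), taking the universal model over $\ap \cup \set{s}$ and relativizing all trace quantifiers of $\psi$ to the globally-$s$ traces; the projection-based semantics of $\exists s$ over the universal model lets the quantifier realize any nonempty $S \subseteq (\pow{\ap})^\omega$ as the set of globally-$s$ traces, while the $s$-free copies of all traces keep the $\ap$-projection full, and your guard $\exists \pi \ldot \G s_\pi$ correctly matches the nonemptiness convention of HyperLTL satisfiability. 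Both arguments are sound. What yours buys is brevity and modularity: undecidability already holds for a fixed universal model and a formula with a single propositional quantifier followed only by trace quantifiers, and the proof cleanly isolates the semantic culprit, namely that \HQPTLP quantification over a universal model amounts to full second-order quantification over trace sets, so model checking subsumes satisfiability. What the paper's proof buys is self-containedness (no reliance on the HyperLTL satisfiability result as a black box) and an explicit parallel to the S1S[$E$] argument. One cosmetic repair: the paper defines \HQPTLP in prenex form, so your formula $\exists s \ldot ((\exists \pi \ldot \G s_\pi) \land \psi^s)$ and the recursively guarded $\psi^s$ must be prenexed by pulling all trace quantifiers to the front; this is routine here because the quantified trace sets are never empty, so the required quantifier-shifting equivalences hold.
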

	\begin{proof}
		Given a finite Kripke structure $K$ and a \HQPTLP formula $\varphi$, the model checking problem asks whether the trace set $T$ produced by $K$ satisfies $\varphi$. The proof follows the undecidability proof for the model checking problem of S1S[$E$]~\cite{hierarchy_hyperlogics}, a logic which lifts S1S to the level of hyperlogics. We describe a reduction from the halting problem of 2-counter machines (which are Turing complete) to the \HQPTLP model checking problem. A 2-counter machine (2CM) consists of a finite set of serially numbered instructions that modify two counters. A configuration of a 2CM is a triple $(n, v_1, v_2) \in \nat^3$, where $n$ determines the next instruction to be executed, and $v_1$ and $v_2$ assign the counter values. Each instruction can either increase or decrease one of the counters; or test either of the counters for zero and, depending on the outcome, jump to another instruction.
		Furthermore, we assume a special instruction $i_\mathit{halt}$, which indicates that the machine has reached a halting state.  A 2CM halts from initial configuration $s_0$ if there is a finite sequence $s_0, \ldots , s_n$ of configurations such that $s_n$ is a halting configuration and $s_{i+1}$ is a result of applying the instruction in $s_i$ to configuration $s_i$. 
		Let $\mathcal{M}$ be a 2CM. We describe $T$ and $\varphi$ such that $T \models \varphi$ iff $\mathcal{M}$ halts. We choose $\ap = \set{i, c_1, c_2}$ and $T$ is the set of all traces where each atomic proposition holds exactly once. That way, a trace $t$ encodes a configuration of the machine: If $i \in t[n]$, $c_1 \in t[v_1]$, and $c_2 \in t[v_2]$, the machine is in configuration $(n, v_1, v_2)$. It is easy to see that $T$ can be produced by a finite Kripke structure. To describe $\varphi$, we make two helpful observations. First, using propositional quantification, we can quantify a trace set $T_q \subseteq T$: a trace $t$ is in $T_q$ iff the quantified proposition $q$ eventually occurs on $t$. Second, for two traces $t, t' \in T$, we can state that $t'$ encodes a configuration which is the successor of the configuration encoded by $t$.
		Using these observations, we define $\varphi = \exists q \ldot \varphi'$, where $q$ encodes a set $T_q \subseteq T$ that is supposed to describe a halting computation. To ensure that $T_q$ describes a halting computation, $\varphi'$ is a conjunction of the following requirements: $T_q$ must
		\begin{enumerate}
			\item be finite, 
			\item contain a halting configuration and the initial configuration,
			\item be predecessor closed with respect to the encoded configurations it contains (except for the initial configuration).
		\end{enumerate}
		Finiteness of $T_q$ can be expressed by stating that there is an upper bound on the values of $i, c_1$, and $c_2$ on the traces in $T_q$. With the observations made before, stating the above requirements in \HQPTLP now remains a straightforward exercise. \qed
	\end{proof}
	Since the model checking problem of \HQPTLP is undecidable, we focus on HyperQPTL to express $\omega$-regular hyperproperties. In particular, we show that HyperQPTL can express a range of relevant properties that are neither expressible in HyperLTL, nor in QPTL.

\subsection{The Expressiveness of HyperQPTL}
\label{sec:Expressiveness_HyperQPTL}
	HyperQPTL combines trace quantification with $\omega$-regularity. The interplay between the two features enables HyperQPTL to express a variety of properties. In Section~\ref{sec:intro}, we showed how HyperQPTL can express a form of promptness. In this section, we further elaborate on the type of properties HyperQPTL can express. In particular, we compare it to Prompt-LTL, a logic that extends LTL with bounded eventualities. Furthermore, HyperQPTL is also able to express epistemic properties by emulating the knowledge operator known from \LTLK.
	
	A straightforward class of properties HyperQPTL can express are $\omega$-regular properties over n-tuples of quantified traces. Formulas expressing this type of properties first have a trace quantifier prefix followed by a QPTL formula, i.e., they lie in the $Q_\pi^* Q_q^*$ fragment. This fragment of HyperQPTL corresponds to the extension of QPTL with \emph{prenex} trace quantification. However, the true expressive power of HyperQPTL originates from the fact that we allow the trace quantifiers and propositional quantifiers to alternate.
	
	\paragraph{Promptness Properties.}
	Promptness properties are an example for HyperQPTL's interplay between trace quantification and propositional quantification. Promptness expresses that eventualities are fulfilled within a bounded number of steps. One way to express promptness properties is the logic Prompt-LTL, which extends LTL with the promptness operator $\LTLdiamond_p$. A system satisfies a Prompt-LTL formula $\varphi$ if there is a bound $k$ such that all traces of the system fulfill the formula where each $\LTLdiamond_p$ in $\varphi$ is replaced by $\LTLdiamond^{\leq k}$, i.e., the system must fulfill all prompt eventualities within $k$ steps. For example, $\varphi = \G \F_p \psi $ holds in a system if there is a bound $k$ such that all traces of the system at all times satisfy $\psi$ within $k$ steps.	
	HyperQPTL can express a different type of promptness properties. In Section~\ref{sec:intro}, Formula~\ref{form:promptness}, we showed how one can state in HyperQPTL that there is a bound, common for all traces, until which an eventuality has to be fulfilled. The idea is to quantify a new proposition $b$, such that the first position in which $b$ is true serves as the bound. Compared to Prompt-LTL, HyperQPTL thus expresses a weaker form of promptness, while still being stronger than pure eventuality. This type of promptness only becomes meaningful when comparing several traces of the system: HyperQPTL can enforce that there is a common bound for all traces (the system cannot starve), but it does not make the bound explicit.
	The following example shows a more involved promptness property expressible in HyperQPTL.
	
	\begin{example}
		HyperQPTL can express \emph{bounded waiting for a grant}. It states that if the system requests access to a shared resource at point in time $t$, then it will be granted access within a bounded amount of time. The bound may depend on the point in time $t$ where access to the resource was requested. However, it may not depend on the current trace.	
		We express this property in HyperQPTL as follows, also adding that the system will not request access twice without being granted access in between.	
		\begin{align}
		\label{form:bounded_waiting1}
		&\forall \pi \ldot \LTLglobally (r_\pi \rightarrow \LTLnext (\neg r_\pi \LTLweakuntil g_\pi)) \\
		\label{form:bounded_waiting2}
		&\forall \pi \ldot \exists b \ldot \forall \pi' \ldot \LTLglobally (r_\pi \land r_{\pi'} \rightarrow \LTLnext(\LTLeventually b \land (\neg b ~ \LTLuntil g_\pi) \land (\neg b ~ \LTLuntil g_{\pi'})))
		\end{align}
		Formula \ref{form:bounded_waiting1} states that no second request is posed before being given a grant. Formula \ref{form:bounded_waiting2} expresses the bounded waiting property by universally quantifying a trace, then existentially quantifying a sequence of bounds $b$. Now, for every trace $\pi'$, whenever $\pi$ and $\pi'$ pose a request at the same point in time, both have to get access to the resource before $b$ holds next. Therefore, for each point in time, there is a bound such that all traces posing a request at that point in time get access within a bounded number of steps. Note that this property differs from saying ``all traces are eventually granted access'', where the bound may also depend on the trace under consideration. In this scenario, each of the infinitely many traces could wait arbitrarily long for the grant. In particular, it could happen that with each trace the waiting time is longer than before.
	\end{example}

	The above example shows how the interplay of trace quantifiers and propositional quantifiers can be leveraged to express a new class of promptness properties. We finally note that compared to Prompt-LTL, HyperQPTL cannot express that all eventualities must be fulfilled within a fixed $k$ number of steps.
	 
	 \begin{corollary}
	 	\label{cor:hyperqptl_incomparable_prompt}
	 	The expressiveness of HyperQPTL and Prompt-LTL is incomparable.
	 \end{corollary}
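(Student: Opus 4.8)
The plan is to establish incomparability by exhibiting one property in each logic that is not expressible in the other. For the direction that HyperQPTL is not subsumed by Prompt-LTL, I would take observational determinism, $\forall \pi \forall \pi' \ldot \G (I_\pi = I_{\pi'}) \rightarrow \G (O_\pi = O_{\pi'})$, which lies in the HyperLTL (hence HyperQPTL) fragment. For the converse, that Prompt-LTL is not subsumed by HyperQPTL, I would use the recurring-promptness formula $\G \F_p p$, which demands a single bound $k$ working uniformly at every position of every trace.

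To show observational determinism is not Prompt-LTL-expressible, I would isolate a structural invariant of Prompt-LTL properties that observational determinism violates. Recall that $S \models \varphi$ in Prompt-LTL iff there is a bound $k$ with $t \models \varphi_k$ for all $t \in S$, where $\varphi_k$ replaces each $\F_p$ by $\F^{\leq k}$. Since the prompt operator occurs only positively, $\F^{\leq k'} \alpha$ implies $\F^{\leq k} \alpha$ for $k \geq k'$, so $\varphi_k$ is monotone in $k$ and each Prompt-LTL property is closed under finite unions: from $S_1 \models \varphi$ with bound $k_1$ and $S_2 \models \varphi$ with bound $k_2$ we obtain $S_1 \cup S_2 \models \varphi$ with bound $\max(k_1, k_2)$. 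Observational determinism fails this closure: two singleton systems $\{t_1\}$ and $\{t_2\}$ each satisfy it vacuously, while their union violates it whenever $t_1, t_2$ share their inputs but differ on an output. Hence no Prompt-LTL formula can define it.

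For the converse I would restrict attention to \emph{singleton} trace sets $T = \{t\}$. On such sets every trace quantifier necessarily binds its variable to $t$, and, since propositional quantification colors all traces uniformly, each quantified proposition amounts to choosing a single sequence over $\{q\}$; thus every HyperQPTL formula collapses to a QPTL formula evaluated on $t$. Consequently, for any HyperQPTL formula $\varphi$ the set $\{ t \mid \{t\} \models \varphi \}$ is $\omega$-regular. By contrast, $\{ t \mid \{t\} \models \G \F_p p \}$ is exactly the set of traces in which the gaps between successive occurrences of $p$ are bounded by some constant; this language is well known not to be $\omega$-regular, since any putative automaton can be pumped to accept a trace with unbounded gaps. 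Were $\G \F_p p$ equal to some HyperQPTL formula $\varphi$, the two sets would coincide on singletons, contradicting $\omega$-regularity.

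I expect the collapse-to-QPTL step, together with the non-regularity of bounded-gap recurrence, to be the technical heart of the argument; the remaining obstacle is to justify the monotonicity of $\varphi_k$ in $k$ carefully, which hinges on the prompt-eventually operator appearing only in positive polarity in Prompt-LTL's positive normal form.
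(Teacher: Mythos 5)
Your proof is correct, and it takes a genuinely different route from the paper — mainly because the paper offers no formal proof of this corollary at all: it is drawn as a summary of the preceding informal discussion, which asserts (without argument) that HyperQPTL's $\exists b \ldot \forall \pi$-style promptness is meaningful only across traces, and that ``HyperQPTL cannot express that all eventualities must be fulfilled within a fixed $k$ number of steps.'' You supply concrete separating properties together with the invariants that make both separations rigorous. For the direction that Prompt-LTL does not subsume HyperQPTL, you replace the paper's implicit witness (its promptness hyperproperty) by observational determinism and isolate closure under finite unions of Prompt-LTL-definable hyperproperties; this is sound, and you correctly flag that monotonicity of $\varphi_k$ in $k$ rests on the prompt operator occurring only in positive polarity, which is exactly how Prompt-LTL is defined. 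For the converse direction your witness $\G \F_p p$ matches the paper's intent, but you give it teeth: the collapse of HyperQPTL to QPTL on singleton trace sets (trace quantifiers can only bind $t$, and uniform recoloring is just the choice of one $q$-sequence) plus the non-$\omega$-regularity of bounded-gap recurrence, which indeed follows by pumping cycles inside the $\emptyset$-blocks of $(\emptyset^n \{p\})^\omega$ to obtain an accepted trace with unbounded gaps while preserving all accepting visits. One scope remark: on ultimately periodic traces, bounded-gap recurrence coincides with $\G \F p$, so your singleton argument separates the two logics only via traces that are not ultimately periodic; it therefore establishes incomparability over arbitrary trace sets — the standard setting for such expressiveness claims, and what the corollary asserts — but it would not transfer as stated to a comparison restricted to trace sets generated by finite-state systems, where a different argument would be needed.
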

 
 	
	\paragraph{Epistemic Properties.}
	Another interesting class of properties that are not expressible in HyperLTL are epistemic properties. Epistemic properties describe the knowledge of agents that interact with each other in a system. Logics that express epistemic properties are often equipped with a so-called knowledge operator, e.g., \LTLK, which is LTL extended with the knowledge operator $\mathcal{K}_A \hspace{2pt} \varphi$. 	
	The operator denotes that an agent $A \subseteq \ap$ knows $\varphi$. An agent $A$ is characterized in terms of the atomic propositions he can observe. The semantics of the operator is described with the following rule
	\begin{align*}
		t, i \models \mathcal{K}_A \hspace{2pt} \varphi \quad \text{iff} \quad \forall t' \ldot t[0,i] =_A t'[0,i] \rightarrow t', i \models \varphi \enspace .
	\end{align*}	
	The formula is evaluated with respect to a trace $t$ and a position $i$. We omit the semantic definition for the rest of the logic, which corresponds to plain LTL. The semantic definition of the operator captures the idea that an agent knows some fact $\varphi$ if $\varphi$ holds on all traces that are indistinguishable for the agent.
	
	\begin{example}[Dining Cryptographers]
		\begin{figure}[t]
			\centering
			\begin{tikzpicture}[->,>=stealth',shorten >=1pt,auto,semithick,scale=1,transform shape]
			\node[state] (paid) {};
			\node[state,above left=1.4 and 1.4 of paid] (ca) {$C_1$};
			\node[state,above right=1.4 and 1.4 of paid] (cc) {$C_3$};
			\node[state,above=of paid] (cb) {$C_2$};
			\node[state,above=1.4 of cb] (env) {$\penv$};
			
			\draw (ca) edge[bend right] node[swap] {$\mathit{out}_1$} (paid)
			(cb) edge node {$\mathit{out}_2$} (paid)
			(cc) edge[bend left] node {$\mathit{out}_3$} (paid)
			(paid) edge node {$\mathit{paid}_\mathit{group}$} +(0,-1.2)
			(env) edge[bend right] node[swap,align=center,very near end] {$\mathit{paid}_1$,\\$s_{12}$, $s_{13}$} (ca)
			(env) edge[bend left] node[align=center,very near end] {$\mathit{paid}_3$,\\$s_{23}$, $s_{13}$} (cc)
			(env) edge node[align=center] {$\mathit{paid}_2$,\\$s_{12}$, $s_{23}$} (cb)
			(env) edge node {$\mathit{paid}_\mathit{NSA}$} +(2.5,0) 
			;
			\end{tikzpicture}
			\caption{The dining cryptographers problem with three cryptographers.}
			\label{fig:dining_cryptos}
		\end{figure}
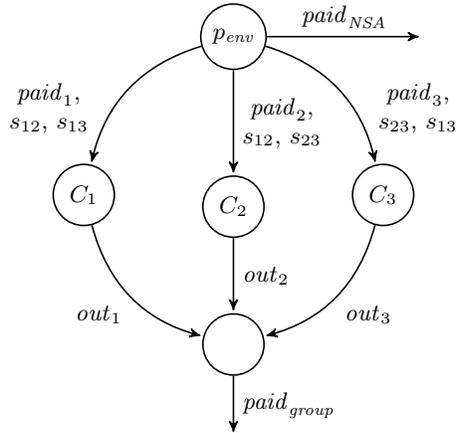
		The dining cryptographers problem~\cite{journals/cacm/Chaum85} is an interesting example of how epistemic properties can characterize non-trivial protocols. The problem describes the following situation (see Fig.~\ref{fig:dining_cryptos}): three cryptographers $C_1,C_2,$ and $C_3$ sit at a table in a restaurant and either one of cryptographers or, alternatively, the NSA paid for their meal. The task for the cryptographers is to figure out whether the NSA or one of the cryptographers paid. However, if one of the cryptographers paid, then the others must not be able to infer who it was. Each cryptographer $C_i$ receives several bits of information: $\mathit{paid_i}$ indicating whether or not he pays the bill, and two secrets, each shared with one of the other cryptographers. The secrets can be used to encode the information they share as output $\mathit{out}_i$. By combining the outputs of all cryptographers, it must become clear whether the NSA or one of the group paid.
		The specification of the protocol can be easily formalized in \LTLK. The following formula  describes the desired behavior of agent $C_1$:
		\begin{align*}
		\mathit{DC}&\mathit{agent1} \coloneqq \\
		&(\mathit{paid}_\mathit{group} \land \neg \mathit{paid}_1
		\rightarrow (\K{C_1} (\mathit{paid}_2 \lor \mathit{paid}_3)  \land
		\neg \K{C_1} \mathit{paid}_2 \land
		\neg \K{C_1} \mathit{paid}_3
		) )\\
		\land ~ &(\mathit{paid}_\mathit{NSA} \rightarrow \K{C_1} (\neg\mathit{paid}_1 \land \neg\mathit{paid}_2 \land \neg\mathit{paid}_3)) \enspace .
		\end{align*}
	\end{example}
	The knowledge operator can also be defined for hyperlogics~\cite{MarkusThesis}. It receives an additional parameter $\pi$, indicating the trace the knowledge refers to. When added to HyperQPTL, it has the following semantics:
	\begin{alignat*}{3}
	&\pathassign, i \models_T \mathcal{K}_{A,\pi} \varphi \quad &&\text{iff} && \quad \forall t' \in T \ldot \pathassign(\pi)[0, i] =_A t'[0, i] \rightarrow \pathassign[\pi \mapsto t'], i \models_T \varphi \enspace .
	\end{alignat*}
	The knowledge operator, however, can be encoded in HyperQPTL using propositional quantification. Epistemic problems, such as the dining cryptographers problem, can thus be expressed in HyperQPTL.
	\begin{theorem}[\cite{MarkusThesis}~]  \label{thm:subsumes-epistemic}
		HyperQPTL can emulate the knowledge operator.
	\end{theorem}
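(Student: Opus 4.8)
The plan is to give a semantics-preserving rewriting that eliminates one occurrence of the knowledge operator at a time, innermost first, so that whenever I treat a subformula $\mathcal{K}_{A,\pi}\varphi$ I may assume that $\varphi$ is already an ordinary (knowledge-free) HyperQPTL formula. Since each step deletes exactly one $\mathcal{K}$ and introduces none, and a nested operator becomes knowledge-free once its argument has been rewritten, the procedure terminates on a plain HyperQPTL formula over the same set of traces $T$. All the work is therefore concentrated in eliminating a single occurrence, and I would first unfold its meaning: $\mathcal{K}_{A,\pi}\varphi$ at $(\pathassign,i)$ asserts $\forall \pi' \ldot (\pathassign(\pi)[0,i] =_A \pathassign(\pi')[0,i]) \rightarrow \pathassign[\pi \mapsto \pi'], i \models_T \varphi$, i.e.\ a universal statement over traces with an agreement precondition on the prefix $[0,i]$, whose body is $\varphi$ with the trace variable $\pi$ rebound to the witness.

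Two features of the logic block a naive translation, and both are overcome with propositional quantification. First, HyperQPTL is prenex, so the quantifier over $\pi'$ must be lifted to the front and hence read at position $0$, whereas the operator is evaluated at a \emph{dynamic} position $i$; second, the precondition refers to the \emph{past} interval $[0,i]$, while the matrix offers only future operators. I would therefore introduce a fresh propositional variable used as a single-position \emph{marker} $m$, pinned down by the (QPTL-expressible, hence available) constraint $(\neg m ~\U~ m) \land \G(m \rightarrow \X \G \neg m)$ that $m$ holds exactly once. With $m$ marking $i$, agreement on $A$ over $[0,i]$ becomes the future formula $\big(\bigwedge_{a \in A}(a_\pi \leftrightarrow a_{\pi'})\big) \U \big(m \land \bigwedge_{a \in A}(a_\pi \leftrightarrow a_{\pi'})\big)$ read at $0$, and ``$\varphi$ holds at $i$'' becomes $\F(m \land \varphi[\pi \mapsto \pi'])$, where $\varphi[\pi \mapsto \pi']$ is $\varphi$ with its references to $\pi$ redirected to $\pi'$, matching the rebinding in the semantics.

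Next I would replace the occurrence $\mathcal{K}_{A,\pi}\varphi$ in the matrix by a fresh propositional variable $k$ and force $k$ to carry, at every position, the truth value of the knowledge condition; this sidesteps the fact that the operator may be evaluated at many positions at once (e.g.\ under a $\G$). Pinning $k$ down needs both directions of a biconditional. \emph{Soundness} (that $k$ is true only where knowledge holds) is captured with a universal trace quantifier and a universal marker by a conjunct of the shape $\forall \pi'_1 \ldot \forall m \ldot (\text{$m$ once}) \rightarrow \big((\F(m\land k) \land \text{agree}(\pi,\pi'_1)) \rightarrow \F(m \land \varphi[\pi\mapsto\pi'_1])\big)$. \emph{Completeness} (that $k$ is false only where knowledge fails) requires, for each position where $k$ is false, a \emph{counter-witness} trace that agrees up to that position yet violates $\varphi$ there; crucially, different positions may demand different witnesses. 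Here HyperQPTL's free interleaving of propositional and trace quantifiers is essential: I would write $\forall m \ldot \exists \pi'_2 \ldot (\text{$m$ once}) \rightarrow \big(\neg\F(m\land k) \rightarrow (\text{agree}(\pi,\pi'_2) \land \F(m \land \neg\varphi[\pi\mapsto\pi'_2]))\big)$, placing $\exists \pi'_2$ inside the scope of $\forall m$ so the witness may depend on the marked position. All new quantifiers $\exists k, \forall \pi'_1, \forall m, \exists \pi'_2$ (fresh) are appended at the inner end of the existing prefix, keeping $\pi$ and the other bound variables in scope, and the two constraints are conjoined with the rewritten matrix $\chi[\mathcal{K}_{A,\pi}\varphi \mapsto k]$.

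Finally I would verify equivalence: the soundness and completeness conjuncts together pin $k$ to the unique sequence whose $i$-th bit is the truth of knowledge-at-$i$ under the fixed outer assignment, so $\exists k$ selects exactly that sequence and $\chi[\mathcal{K}\mapsto k]$ agrees with the original matrix position by position; a routine induction on formula structure lifts this to full equivalence, and termination of the elimination yields the claim. I expect the \textbf{main obstacle} to be precisely the completeness direction under the prenex restriction: the position-dependent choice of a counter-witness trace cannot be made by a single front-loaded $\exists \pi'$ and becomes expressible only by nesting the trace quantifier beneath the universally quantified positional marker. Its companion difficulty is re-expressing the inherently past-facing agreement predicate on $[0,i]$ as a future-time formula via the marker, together with the bookkeeping needed to keep the construction prenex and the introduced variables fresh across nested and sibling knowledge operators.
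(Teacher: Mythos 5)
Your core single-occurrence construction is sound and matches the paper's proof in all essentials: a fresh proposition (your $k$, the paper's $u$) records where the knowledge operator holds; a universally quantified marker proposition (your $m$, the paper's $r$, which there switches from true to false exactly once rather than holding exactly once) turns the past-facing prefix-agreement condition into a future-time formula; a universal trace quantifier enforces the positive direction; and --- the point you correctly identify as essential --- the counter-witness trace quantifier $\exists \pi'$ is nested \emph{inside} the universal marker quantifier so that the witness may depend on the marked position. The paper differs only in that it assumes negation normal form and enforces a single direction per occurrence ($\forall \pi'$ for positive occurrences, $\exists \pi'$ for negated ones), whereas you pin $k$ by a biconditional; for a single occurrence your variant is fine and even spares the polarity bookkeeping.

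However, your elimination order --- innermost first --- creates a genuine gap for nested knowledge operators. Consider $\mathcal{K}_{A,\pi}\,\mathcal{K}_{B,\pi}\, a_\pi$. After you replace the inner operator by $k_1$, the sequence $k_1$ is pinned to the truth of $\mathcal{K}_{B,\pi}\, a_\pi$ relative to the \emph{current} binding of $\pi$. When you then eliminate the outer operator, your constraints use $\varphi[\pi \mapsto \pi'_1]$ where now $\varphi = k_1$; but $k_1$ is a quantified proposition, uniform across all traces by the semantics of HyperQPTL, so the substitution does nothing: $k_1[\pi \mapsto \pi'_1] = k_1$. Your outer constraint therefore tests whether the trace originally bound to $\pi$ has $B$-indistinguishable continuations satisfying $a$, whereas the semantics of the nested operator requires this for the rebound trace $\pi'_1$, whose $B$-prefix can be entirely different (e.g.\ when $A \cap B = \emptyset$); this information is simply not recoverable from $k_1$ after the fact. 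This is precisely why the paper replaces occurrences in the opposite order: the occurrence $\mathcal{K}_{A,\pi}\psi$ is replaced by $u$ while its argument $\psi$ --- \emph{still containing} any nested knowledge operators --- is copied into the new constraint as $\psi[\pi \mapsto \pi']$, so the nested operators are syntactically rebound to $\pi'$ and only eliminated in later rounds. Your argument goes through for formulas without nested knowledge, and it can be repaired by switching to outermost-first elimination (accepting that nested occurrences get duplicated into your soundness and completeness conjuncts, and recursing on those copies); as written, though, the innermost-first induction asserts an equivalence that fails under nesting.
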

	\begin{proof}
		We recap the proof from~\cite{MarkusThesis}: Let $\varphi = Q_{\pi\slash q} \ldots Q_{\pi\slash q} \ldot \varphi'$ be a HyperQPTL formula, equipped with the knowledge operator as defined above. We assume that $\varphi$ is given in negated normal form, i.e. each $\mathcal{K}_{A,\pi}$ occurs either in positive position or in negated form. Let $u$ and $t$ be fresh propositions and let $\pi'$ be a fresh trace variable. Recursively, we replace each knowledge operator $\mathcal{K}_{A,\pi}$ occurring in $\varphi$ in positive position with the following formula
		\begin{align*}
		& Q_{\pi\slash q} \ldots Q_{\pi\slash q} \ldot  \exists u \ldot \forall r \ldot \forall \pi' \ldot ~  \varphi'[{\mathcal{K}_{A,\pi} \psi} \mapsto u] ~\wedge\\
		& \quad ((r~ \LTLuntil ~(u\wedge r \wedge \LTLnext \LTLsquare \neg r)) \wedge \LTLsquare (r \rightarrow A_\pi = A_{\pi'}) \rightarrow \LTLsquare (r \wedge \LTLnext \neg r \rightarrow \psi[\pi \mapsto \pi']))
		\end{align*}
		and each $\mathcal{K}_{A,\pi}$ occurring negatively with the following formula
		\begin{align*}
		& Q_{\pi\slash q} \ldots Q_{\pi\slash q} \ldot  \exists u \ldot \forall r \ldot \exists \pi' \ldot ~  \varphi'[{\neg\mathcal{K}_{A,\pi}} \psi \mapsto u] ~ \wedge \\
		& \quad ((r~ \LTLuntil ~(u\wedge r \wedge \LTLnext \LTLsquare \neg r)) \rightarrow \LTLsquare (r \rightarrow A_\pi = A_{\pi'}) \wedge\LTLsquare (r \wedge \LTLnext \neg r \rightarrow \neg \psi[\pi \mapsto \pi'])),
		\end{align*}
		where we use $\varphi'[{\mathcal{K}_{A,\pi}} \psi \mapsto u]$ to denote that in $\varphi'$, \emph{a single} occurrence of the knowledge operator is replaced by $u$, and $\psi[\pi \mapsto \pi']$ to denote the formula where $\pi$ is replaced by $\pi'$. The existentially quantified proposition $u$ indicates the points in time where the knowledge operator is supposed to hold/not hold. The universally quantified proposition $r$ is assumed to change once from $r$ to $\neg r$ and thereby point at one of the points in time picked by $u$. It is then used to compare the prefix of the old trace $\pi$ and an alternative trace quantified by the trace variable $\pi'$. \qed
	\end{proof}

\section{HyperQPTL Realizability}
\label{sec:HyperQPTL_realizability}
	In reactive synthesis, the task is, given a specification $\varphi$, to construct a system that satisfies the specification. More precisely, the system is assumed to receive some inputs from an environment and has to react with outputs such that the specification is fulfilled. The realizability problem asks for the existence of a so-called \emph{strategy tree}, where the edges are labeled with all possible inputs and the task is to find a function $f$ that labels the nodes with the corresponding outputs. Figure~\ref{fig:strategy_tree} shows a strategy tree for a single input bit $i$.
	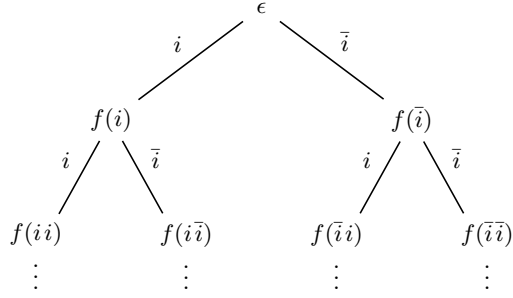
\begin{figure}
		\centering
		\tikzstyle{treenode} = [text width = 3em, inner sep=3pt, text centered]
		\begin{tikzpicture}[
			semithick,
			level/.style={level distance=10cm},
			level 1/.style={sibling distance=4cm, level distance=1.5cm},
			level 2/.style={sibling distance=2cm, level distance=1.8cm}
		]
		
			\node[treenode] {$\epsilon$}
			child {
				node[treenode] {$f(i)$}        
				child {
					node[treenode] {$f(i\hspace{1pt}i)$ \\ $\vdots$}
					edge from parent
					node[above left] {$i$}
				}
				child {
					node[treenode] {$f(i \hspace{1pt} \bar{i})$ \\ $\vdots$}
					edge from parent
					node[above right] {$\bar{i}$}
				}
				edge from parent 
				node[above left] {$i$}
			}
			child {
				node[treenode] {$f(\bar{i})$}        
				child {
					node[treenode] {$f(\bar{i} \hspace{1pt} i)$ \\ $\vdots$}
					edge from parent
					node[above left] {$i$}
				}
				child {
					node[treenode] {$f(\bar{i} \hspace{1pt} \bar{i})$ \\ $\vdots$}
					edge from parent
					node[above right] {$\bar{i}$}
				}
				edge from parent         
				node[above right] {$\bar{i}$}
			};
		\end{tikzpicture}
		\caption{A strategy tree for the reactive realizability problem.}
		\label{fig:strategy_tree}
	\end{figure}	
	We define strategies following~\cite{conf/cav/FinkbeinerHLST18}.
		Let a set $\ap = I \cupdot O$ be given.
		A \emph{strategy} $f \colon \strat{I}{O}$  maps sequences of input valuations $\pow{I}$ to an output valuation $\pow{O}$.
		For an infinite word $w = w_0 w_1 w_2 \cdots \in (\pow{I})^\omega$, the trace corresponding to a strategy $f$ is defined as $(f(\epsilon) \cup w_0)(f(w_0) \cup w_1)(f(w_0 w_1) \cup w_2)\ldots \in (2^{I \cup O})^\omega$.
		For any trace $w = w_0 w_1 w_2 \ldots \in (2^{I \cup O})^\omega$ and strategy $f \colon \strat{I}{O}$, we lift the set containment operator $\in$ defining that $w \in f$ iff $f(\epsilon) = w_0 \cap O$ and $f((w_0 \cap I) \cdots (w_i \cap I)) = w_{i+1} \cap O$ for all $i \geq 0$. We say that a strategy $f$ satisfies a HyperQPTL formula $\varphi$ over $\ap = I \cupdot O$ iff $\set{w ~|~ w \in f}$ satisfies $\varphi$.
	
	With the definition of a strategy at hand, we can define the realizability problem of HyperQPTL formally.
	\begin{definition}[HyperQPTL Realizability]
		A HyperQPTL formula $\varphi$ over atomic propositions $\ap = I \cupdot O$ is realizable if there is a strategy $f \colon \strat{I}{O}$ that satisfies $\varphi$.
	\end{definition}
	For technical reasons, we assume (without loss of generality) that quantified atomic propositions are classified as outputs, not inputs. This complies with the intuition that propositional quantifiers should be a means for additional expressiveness; they should not overwrite the inputs received from the environment.
	The definition of realizability of QPTL and HyperLTL specifications is inherited from the definition for HyperQPTL.
	
	Compared to the standard realizability problem, the distributed realizability problem is defined over an architecture, containing a number of processes interacting with each other. The goal is to find a strategy for each of the processes. In the following proofs, we will make use of the distributed realizability problem of QPTL, which we therefore also define formally.
	
		A \emph{distributed architecture}~\cite{conf/focs/PnueliR90, conf/lics/FinkbeinerS05} $A$ over atomic propositions $\ap$ is a tuple $\langle P,p_\mathit{env},\mathcal{I},\mathcal{O} \rangle$, where $P$ is a finite set of processes and $p_\mathit{env} \in P$ is a designated environment process. The functions $\mathcal{I}: P \rightarrow 2^\ap$ and $\mathcal{O}: P \rightarrow 2^\ap$ define the inputs and outputs of processes. The output of one process can be the input of another process. The output of the processes must be pairwise disjoint, i.e., for all $p \ne p' \in P$ it holds that $\mathcal{O}(p) \cap \mathcal{O}(p') = \emptyset$. We assume that the environment process forwards inputs to the processes and has no input of its own, i.e., $\mathcal{I}(p_\mathit{env}) = \emptyset$.
	
	\begin{definition}[Distributed QPTL Realizability~\cite{conf/lics/FinkbeinerS05}]
		A QPTL formula $\varphi$ over free atomic propositions $\ap$ is realizable in an architecture $A = \langle P,p_\mathit{env},\mathcal{I},\mathcal{O} \rangle $ if for each process $p \in P$, there is a strategy $f_p \colon \strat{\mathcal{I}(p)}{\mathcal{O}(p)}$ such that the combination of all $f_p$ satisfies $\varphi$.
	\end{definition}
The distributed realizability problem for QPTL is (inherited from LTL) in general undecidable~\cite{conf/focs/PnueliR90}. However, we will use the result that the problem remains decidable for architectures without \emph{information forks}\cite{conf/lics/FinkbeinerS05}. The notion of  information forks captures the flow of data in the system.
Intuitively, an architecture contains an information fork if the processes cannot be ordered linearly according to their informedness.
Formally, an information fork in an architecture $A = \langle P,p_\mathit{env},\mathcal{I},\mathcal{O} \rangle $ is defined as a tuple $(P',V', p, p')$, where $p,p'$ are two different processes, $P' \subseteq P$, and $V' \subseteq AP$ is disjoint from $\mathcal{I}(p) \cup \mathcal{I}(p')$.
$(P',V', p, p')$ is an information fork if $P'$ together with the edges that are labeled with at least one variable from $V'$ forms a subgraph rooted in the environment and there exist two nodes $q,q' \in P'$ that have edges to $p,p'$, respectively, such that $\mathcal{O}(q)\cap\mathcal{I}(p) \nsubseteq \mathcal{I}(p')$ and $\mathcal{O}(q')\cap\mathcal{I}(p') \nsubseteq \mathcal{I}(p)$. The definition formalizes the intuition that $p$ and $p'$ receive incomparable input bits, i.e., they have incomparable information.
	\begin{example}\label{ex:info_fork}
		Two example architectures are depicted in Fig.~\ref{fig:distributed-architectures}\cite{conf/cav/FinkbeinerHLST18}. The processes in Fig.~\ref{fig:informationFork1} receive distinct inputs and thus neither process is more informed than the other. The architecture therefore contains an information fork with $P' = \set{\mathit{env}, p, p'}, V' = \set{i, i'}, q = \mathit{env}, q' = \mathit{env}$.
		The processes in Fig.~\ref{fig:architecture-incomplete-information} can be ordered linearly according to the subset relation on the inputs and thus the architecture contains no information fork.
		
		\begin{figure}[t]
			\begin{subfigure}[t]{0.49\columnwidth}
				\centering
				\begin{tikzpicture}[->,>=stealth',shorten >=1pt,auto,semithick,scale=1,transform shape,scale=0.8]
				\tikzstyle{every state}=[shape=rectangle]
				\tikzstyle{envstate}=[shape=circle,scale=.95]
				
				\node [state,envstate] (e) {$env$};
				\node [state, below left=1 of e] (a) {$p$};
				\node [state, below right=1 of e] (b) {$p'$};
				\path[->]
				(e) edge node [label,above left = 0 and -0.1] {$i$} (a)
				(e) edge node [label,above right = 0 and -0.1] {$i'$} (b)
				(a) edge node [label,above left = -0.15 and 0] {$o$} +(0,-1.3)
				(b) edge node [label,above right = -0.15 and 0] {$o'$} +(0,-1.3)
				;
				\end{tikzpicture}
				\caption{Information fork: An architecture with two processes; process $p$ to produces output $o$ from input $i$ and $p'$ produces output $o'$ from input $i'$.}
				\label{fig:informationFork1}
			\end{subfigure}\hfill
			\begin{subfigure}[t]{0.49\textwidth}
				\centering
				\begin{tikzpicture}[->,>=stealth',shorten >=1pt,auto,semithick,scale=1,transform shape,scale=0.8]
				\tikzstyle{every state}=[shape=rectangle]
				\tikzstyle{envstate}=[shape=circle,scale=.95]
				
				\node [state,envstate] (e) {$env$};
				\node [state, below left=1 of e] (a) {$p$};
				\node [state, below right=1 of e] (b) {$p'$};
				\path[->]
				(e) edge node [label,above left = 0 and -0.1] {$i$} (a)
				(e) edge node [label,above right = 0 and -0.1] {$i,i'$} (b)
				(a) edge node [label,above left = -0.15 and 0] {$o$} +(0,-1.3)
				(b) edge node [label,above right = -0.15 and 0] {$o'$} +(0,-1.3)
				;
				\end{tikzpicture}
				\caption{No information fork: The same architecture as on the left, where the inputs of process $p'$ are changed to $i$ and $i'$.}
				\label{fig:architecture-incomplete-information}
			\end{subfigure}
			\caption{Distributed architectures}
			\label{fig:distributed-architectures}
		\end{figure}
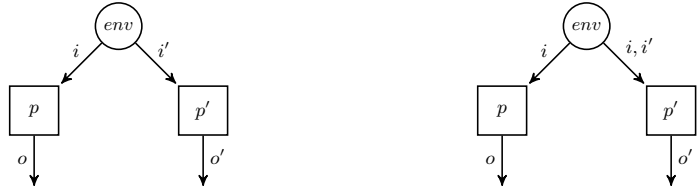
	\end{example}
	In the following sections, we identify tight syntactic fragments of HyperQPTL for which the standard realizability problem is decidable. We give decidability proofs and show that formulas outside the decidable fragments are in general undecidable. An important aspect for decidability is the number of universal trace quantifiers that appear in the formula. We thus present our findings in three categories, depending on the number of universal trace quantifiers a formula has.
	
	\subsection{No Universal Trace Quantifier}
	\label{sec:NoUnivTrace}
	We show that the realizability problem of any HyperQPTL formula without a $\forall_\pi$ quantifier is decidable. The problem is reduced to QPTL realizability.
	\begin{theorem}
		\label{thm:dec_no_univ_trace_quantifier}
		Realizability of the $(\exists_{\pi}^*Q_q^*)^*$ fragment of HyperQPTL is decidable.
	\end{theorem}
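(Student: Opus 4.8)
The plan is to prove decidability by reducing realizability of the $(\exists_{\pi}^*Q_q^*)^*$ fragment to the realizability problem of plain QPTL, which is decidable. The key observation is that a formula in this fragment has no universal trace quantifier, so every trace quantifier is existential. Since we are asking for the existence of a \emph{single} strategy $f$ and the induced trace set $\set{w \mid w \in f}$ must satisfy $\varphi$, the existential trace quantifiers can be handled in a controlled way. First I would argue that it suffices to witness each existential trace quantifier by the \emph{same} canonical trace, or more precisely, to collapse the hyperproperty into a single-trace statement. Because there is no $\forall \pi$, no trace quantifier forces us to range over all traces of the strategy; instead, we only need to exhibit witnesses from the strategy's trace set.

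The central step is to transform the HyperQPTL formula $\varphi = (\exists_{\pi}^*Q_q^*)^* \ldot \psi$ into a QPTL formula over a suitably enlarged alphabet. For each existentially quantified trace variable $\pi_j$, I would introduce a fresh copy of the atomic propositions, say $\ap^{(j)}$, so that a single trace over the product alphabet $\ap \times \cdots \times \ap$ simultaneously encodes all the witness traces $\pi_1, \dots, \pi_k$. The propositional quantifiers $Q_q$ carry over directly to QPTL propositional quantifiers, since their semantics (quantifying a sequence in $(2^{\{q\}})^\omega$ that all traces share) matches QPTL quantification on the single product trace. The interleaving of existential trace and propositional quantifiers is unproblematic here: both become existential (or, for the $Q_q$ blocks, mixed) quantifiers in QPTL, and none of them is a universal trace quantifier that would demand closure over the full strategy language. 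The resulting QPTL formula $\varphi'$ is then realizable in an appropriate distributed architecture precisely when $\varphi$ is realizable by a strategy $f$.

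The subtlety I would need to handle carefully is the coupling between the witness traces and the strategy: each witnessed trace $\pi_j$ must genuinely be a trace of the \emph{same} strategy $f$, i.e. lie in $\set{w \mid w \in f}$, and they must all be consistent with one underlying output function reading only its own input history. This is exactly what the distributed realizability framework captures. I would set up a distributed architecture with one process per trace copy, all implementing the same strategy, sharing the output labeling, so that the constraint ``$w \in f$'' is enforced by requiring the copies to agree as prescribed by $f$. Here the earlier machinery of information forks is relevant: I would verify that the architecture encoding these synchronized existential witnesses contains no information fork, which by the cited result of Finkbeiner and Schewe guarantees decidability of the distributed QPTL realizability instance.

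The main obstacle I expect is the faithful encoding of the requirement that all existentially quantified witness traces come from one and the same strategy, while the propositional quantifiers may be freely interleaved among the trace quantifiers. One must ensure that quantifying a shared $q$-sequence \emph{between} two existential trace quantifiers does not break the reduction, i.e. that the QPTL propositional quantifier applied to the product trace correctly mimics the HyperQPTL semantics in which all traces agree on $q$. Establishing that this interleaving is preserved, and that the constructed architecture is genuinely fork-free so that the decidability result applies, is the technical heart of the argument; the remaining correspondence between satisfaction of $\varphi$ by $f$ and satisfaction of $\varphi'$ on the product is then a routine induction on the quantifier structure.
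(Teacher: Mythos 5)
Your encoding of the existential trace quantifiers---one fresh copy of the atomic propositions per trace variable, with the propositional quantifiers carried over in place---is exactly the paper's construction, and your observation that the absence of $\forall \pi$ is what makes this work is the right key insight. The gap lies in how you enforce that the encoded witness traces all stem from one and the same strategy. You delegate this to the distributed-synthesis framework (``one process per trace copy, all implementing the same strategy''), but a distributed architecture cannot express that requirement: each process in an architecture implements its \emph{own} strategy, processes must have pairwise disjoint outputs, and the only way to force two processes to behave identically is to write that constraint into the specification itself. Worse, if each trace-copy process receives its own input copy from the environment, the resulting architecture contains an information fork (it is precisely the architecture of Fig.~\ref{fig:informationFork1}), so the Finkbeiner--Schewe decidability result you invoke does not apply to it; if instead the trace-copy processes receive no inputs---the correct setup for existential witnesses, since an existential quantifier makes one deterministic choice of a full trace, inputs included---then distributed synthesis degenerates to plain existential choice and buys nothing over QPTL's own propositional quantification.

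What is missing is the explicit consistency constraint. The paper stays inside ordinary (non-distributed) QPTL realizability: it replaces $\exists \pi_i$ by $\exists a^0_{\pi_i} \ldot \ldots \exists a^k_{\pi_i}$ and conjoins the formula $\bigwedge_{i\leq n} \bigwedge_{j \leq n} (I_{\pi_i} \neq I_{\pi_j}) \R (O_{\pi_i} = O_{\pi_j})$, which states that any two encoded traces agree on outputs for as long as they agree on inputs, i.e., that they are branches of a single strategy tree. This conjunct does double duty: in the forward direction it is automatically satisfied when the witnesses are taken from an actual strategy, and in the backward direction it guarantees that the extracted strategy (follow a witness trace whenever the input prefix matches one, output $\emptyset$ otherwise) is well defined, because the output is unique even when several witnesses share an input prefix. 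Your ``routine induction'' glosses over exactly this well-definedness issue. Without this formula (or an equivalent mechanism) the reduction is not faithful; with it, the detour through distributed architectures and information-fork arguments is unnecessary, since the target problem is plain QPTL realizability, decidable via translation to B\"uchi automata.
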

	\begin{proof}
		Let a $(\exists_{\pi}^*Q_{q}^*)^*$ HyperQPTL formula $\varphi$ over $\ap = I \cupdot O = \{a^0,\ldots,a^{k}\}$ with trace quantifiers $\pi_0, \ldots \pi_n$ be given. 
		We reduce the problem to the realizability problem of QPTL, which is known to be decidable (since QPTL formulas can be translated to Büchi automata). The idea is to replace each existential trace quantifier $\exists \pi_i$ with quantification of propositions $a^0_{\pi_i}, a^1_{\pi_i}, \ldots , a^k_{\pi_i}$, one for each $a^j \in \ap$, thereby mimicking the quantification of a trace. To make sure that only traces from an actual strategy tree are chosen, we add a dependency formula which forces the outputs to be dependent on the inputs. The following QPTL formula implements the idea.
		\begin{align*}
			\varphi_\mathit{QPTL}  \coloneqq \enspace & \varphi [i \leq n: \exists \pi_i \mapsto \exists a^0_{\pi_i} \ldot \ldots \exists a^k_{\pi_i} \ldot] \enspace \land \\ 
			& \quad \bigwedge_{i\leq n} \enspace \bigwedge_{j \leq n} (I_{\pi_i} \neq I_{\pi_j}) \R (O_{\pi_i} = O_{\pi_j})
		\end{align*}
		We use the notation $[i \leq n: \exists \pi_i \mapsto \exists a^0_{\pi_i} \ldot \ldots \exists a^k_{\pi_i} \ldot]$ to indicate that each $\pi_i$ for $0 \leq i \leq n$ is replaced with the respective series of existential propositional quantification. Furthermore, we write $I_{\pi_i} \neq I_{\pi_j}$ as syntactic sugar for $\bigvee_{a \in I} a_{\pi_i} \nleftrightarrow a_{\pi_j}$ (and similarly for $O_{\pi_i} = O_{\pi_j}$).
		We show that $\varphi$ and $\varphi_\mathit{QPTL}$ are equirealizable.
		For the first direction, assume that $\varphi$ is realizable by a strategy $f$. Notice that all atomic propositions in $\varphi_\mathit{QPTL} $ are bound by a propositional quantifier. Therefore, if the witness sequences for the quantified propositions can be chosen correctly, any strategy realizes $\varphi_\mathit{QPTL}$. Propositions $a_{\pi_i}^j$ are chosen according to the witness traces of $f \models \varphi$. Witnesses for the remaining atomic propositions are also chosen according to their witnesses from $f \models \varphi$. Now, the first conjunct of $\varphi_\mathit{QPTL}$ is fulfilled since $f \models \varphi$ holds. The second conjunct is fulfilled since any two traces $\pi_i, \pi_j$ of a strategy tree fulfill by construction $(I_{\pi_i} \neq I_{\pi_j}) \R (O_{\pi_i} = O_{\pi_j})$.
		For the other direction, assume that $\varphi_\mathit{QPTL}$ is realizable (by construction independently from the strategy). Let $t_{a^0_{\pi_0}}, \ldots, t_{a^k_{\pi_n}}$ be the witness sequences for the respective quantified atomic propositions.		
		The following strategy realizes $\varphi$.
		$$
			f(\sigma)= 
			\begin{cases}
				\set{t_{a_{\pi_i}} [| \sigma |] \mid a \in O} \quad & \text{if for some } i \leq n,\\
					& \quad \sigma = \set{t_{a_{\pi_i}} [0] \mid a \in I} \ldots  \set{t_{a_{\pi_i}} [| \sigma |] \mid a \in I} \\
				\emptyset              			 & \text{otherwise}
			\end{cases}
		$$
		Strategy $f$ chooses the outputs according to the witnesses for the propositions encoding the traces. Note that because of the second conjunct in $\varphi_\mathit{QPTL}$, the output is always unique, even if several encoded traces start with the same input sequence. Now, $f \models \varphi$ holds because of the first conjunct of $\varphi_\mathit{QPTL}$. \qed
	\end{proof}

	
	\subsection{Single Universal Trace Quantifier}
	\label{sec:SingleUnivTrace}
	In this fragment, we allow exactly one universal trace quantifier. It is particularly interesting as it contains many promptness properties. For example, the following promptness formulation mentioned in the introduction lies within the fragment:
	$$
	\exists b. \forall \pi.~ \F b \land (\neg b ~ \U e_\pi) \enspace .
	$$
	
	\begin{theorem}
		\label{thm:dec_single_trace_quantifier}
		Realizability of the $\exists_{q/\pi}^*\forall_q^*\forall_{\pi}Q_q^*$ fragment is decidable.
	\end{theorem}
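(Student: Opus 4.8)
The plan is to reduce realizability of the $\exists_{q/\pi}^*\forall_q^*\forall_{\pi}Q_q^*$ fragment to the \emph{distributed} QPTL realizability problem over an architecture that contains no information fork, which is decidable by the result cited earlier. The reason I expect this to work is that the fragment contains only a \emph{single} universal trace quantifier: this $\forall_\pi$ can be identified with the universal quantification over all traces of a strategy that is inherent in the (distributed) QPTL realizability problem, whereas the preceding existential trace quantifiers merely force us to fix finitely many witness traces once and for all.

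Concretely, I would first normalize the prefix. Since universal quantifiers commute and a bare propositional variable $q$ is evaluated uniformly and independently of the trace assignment, the block $\forall_q^*$ can be moved to the right of $\forall_\pi$, yielding an equivalent formula $\exists_{q/\pi}^*\,\forall_\pi\,(\forall_q^* Q_q^*\,\psi)$ whose part under $\forall_\pi$ is an ordinary QPTL formula $\chi$; the single $\forall_\pi$ then matches the universal reading of QPTL realizability (``every trace of the strategy satisfies $\chi$''). The existentially quantified objects in the prefix --- each existential trace variable $\pi_i$, encoded as in Theorem~\ref{thm:dec_no_univ_trace_quantifier} by a fresh copy $a_{\pi_i}$ of every $a\in\ap$, and each existential propositional variable $q$ --- are handled by the \emph{architecture} rather than by the formula, precisely because the order $\exists\!\cdots\forall_\pi$ must not be flipped: a naive encoding as prenex existential propositional quantifiers inside $\chi$ would re-choose the witnesses for each universal trace. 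I therefore introduce, for each such existential object, a dedicated process with \emph{empty input} and the corresponding propositions as outputs; since the input domain $(2^\emptyset)^*$ has a unique history of each length, such a process computes exactly one fixed output sequence, which is the desired once-and-for-all witness. A main process $p_\mathit{sys}$ receives the genuine inputs $I$ and produces the genuine outputs $O$, modelling the universally quantified trace.

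To force each encoded witness trace $\pi_i$ to be a genuine trace of the synthesized strategy $f=f_{p_\mathit{sys}}$, I add to $\chi$ the dependency conjunct $(I_{\pi_i}\neq I_\pi)\R(O_{\pi_i}=O_\pi)$ familiar from Theorem~\ref{thm:dec_no_univ_trace_quantifier}, now relating each witness to the universal trace $\pi$. Because $\pi$ ranges over \emph{all} traces of $f$, instantiating it to the trace whose input equals that of $\pi_i$ pins $\pi_i$ down to $f$'s response on that input; conversely, any two traces of a single strategy satisfy this dependency by causality. Equirealizability then follows in both directions: a strategy realizing $\varphi$ supplies the witness sequences together with $f$ itself, and a distributed solution yields $f=f_{p_\mathit{sys}}$ together with witnesses that the dependency conjuncts certify to be traces of $f$.

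It remains to argue decidability of the resulting distributed QPTL instance, and this is where the single-universal-trace restriction pays off: every witness process has empty input, hence its information is $\emptyset$ and thus comparable to (a subset of) the information $I$ of $p_\mathit{sys}$, so the processes are linearly ordered by informedness and the architecture contains no information fork. Decidability follows. I expect the main obstacle to be getting this ordering argument and the quantifier bookkeeping exactly right --- in particular justifying the commutation of $\forall_q^*$ past $\forall_\pi$ at the level of the HyperQPTL semantics, and confirming that routing the existential prefix through empty-input processes reproduces the intended $\exists\!\cdots\forall$ semantics without ever creating incomparable information.
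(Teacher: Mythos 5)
Your proposal is correct, and its core engine is the same as the paper's: after commuting $\forall_q^*$ past $\forall_\pi$, you reduce to distributed QPTL realizability in a fork-free architecture, with empty-input processes fixing the existential witnesses once and for all and the release conjunct $(I_{\pi_i} \neq I)\R(O_{\pi_i} = O)$ pinning those witnesses to traces of the synthesized strategy --- this is precisely the construction in the paper's Lemma~\ref{lem1:dec_single_trace_quantifier}. Where you genuinely differ is in the treatment of the existentially quantified \emph{propositions} in the prefix. The paper proves a separate reduction (Lemma~\ref{lem2:dec_single_trace_quantifier}): each prefix $\exists q$ is replaced by an existential trace quantifier $\exists \pi_j$, with $q$ read off a designated input bit $i$ of the witness trace, so that the formula lands in the $\exists_\pi^*\forall_\pi Q_q^*$ fragment and Lemma~\ref{lem1:dec_single_trace_quantifier} applies as a black box. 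You instead route each prefix $\exists q$ directly through the architecture as an additional empty-input process outputting $q$; this is sound for exactly the reason you give (an empty-input process computes one fixed sequence, matching the uniform, trace-independent semantics of $\exists q$, and empty-input processes have comparable --- indeed empty --- information, so they cannot create an information fork). Your route is slightly more self-contained and avoids the paper's w.l.o.g.\ assumption that $I$ is non-empty, which Lemma~\ref{lem2:dec_single_trace_quantifier} needs for its designated input bit. What the paper's detour buys is Corollary~\ref{cor:hyperqptl_2_hyperltl}: because Lemma~\ref{lem2:dec_single_trace_quantifier} eliminates propositional prefix quantifiers in favor of trace quantifiers, it yields a sound reduction from HyperQPTL realizability to HyperLTL realizability, which is what the BoSy-based synthesis implementation in Section~\ref{sec:Experiments} relies on; your architecture-level encoding proves decidability equally well but does not produce that reusable reduction.
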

	We show the theorem in two steps. First, we generalize a proof from \cite{conf/cav/FinkbeinerHLST18}, showing that realizability of the $\exists_{\pi}^* \forall_{\pi} Q_q^*$ fragment is decidable. Second, we show that we can reduce the realizability problem of any HyperQPTL formula to a formula where some propositional quantifiers are replaced with trace quantifiers.
	
	\begin{lemma}
		\label{lem1:dec_single_trace_quantifier}
		Realizability of the $\exists_{\pi}^* \forall_{\pi} Q_q^*$ fragment is decidable.
	\end{lemma}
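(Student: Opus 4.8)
The plan is to reduce realizability of a formula $\varphi = \exists \pi_1 \ldot \ldots \exists \pi_m \ldot \forall \pi \ldot Q q_1 \ldot \ldots Q q_n \ldot \psi$ (with $\psi$ quantifier free) to the \emph{distributed} QPTL realizability problem over an architecture without information forks, which is decidable by~\cite{conf/lics/FinkbeinerS05}. This generalizes the $\exists_\pi^* \forall_\pi$ HyperLTL construction of~\cite{conf/cav/FinkbeinerHLST18}: there the matrix is an LTL formula and the target is distributed LTL realizability, whereas here the additional propositional suffix $Q q_1 \ldot \ldots Q q_n$ turns the matrix into a QPTL formula, so the target becomes distributed QPTL realizability.

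First I would set up the architecture $\arch$ over the copied alphabet that contains, for every $a \in \ap$ and every trace variable among $\pi, \pi_1, \ldots, \pi_m$, a fresh copy $a_\pi, a_{\pi_i}$. It has three kinds of processes: the environment $\penv$ produces the universal input $I_\pi$; a system process $p_\mathit{sys}$ reads $I_\pi$ and produces $O_\pi$ -- this process \emph{is} the strategy $f$ we synthesize for the universal copy; and a witness process $p_\mathit{wit}$ that has \emph{no} input and outputs all propositions $a_{\pi_i}$. Giving $p_\mathit{wit}$ no input is the crucial point: it forces the existential witness traces to be fixed sequences that cannot react to the universal input $I_\pi$, exactly mirroring the semantics in which the $\exists$-witnesses are chosen before the $\forall$-trace. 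A trace property cannot express this independence, which is why a genuinely distributed reduction is needed. Because $p_\mathit{wit}$ is strictly less informed than $p_\mathit{sys}$, the processes are linearly ordered by informedness and $\arch$ contains no information fork.

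The QPTL specification is
\begin{equation*}
\varphi_\mathit{dist} \coloneqq \Big( Q q_1 \ldot \ldots Q q_n \ldot \psi' \Big) \land \bigwedge_{1 \le i \le m} \mathit{cons}(\pi_i, \pi) \land \bigwedge_{1 \le i < j \le m} \mathit{cons}(\pi_i, \pi_j) \enspace,
\end{equation*}
where $\psi'$ is $\psi$ with each $a_\pi, a_{\pi_i}$ read off the corresponding copy, and $\mathit{cons}(\sigma, \tau) \coloneqq (I_\sigma \neq I_\tau) \R (O_\sigma = O_\tau)$ is the dependency formula already used in Theorem~\ref{thm:dec_no_univ_trace_quantifier}. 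The conjunct $\mathit{cons}(\pi_i, \pi)$ ties the witnesses to $f$: since $\varphi_\mathit{dist}$ must hold for every environment input, it holds in particular on the combined trace whose universal input equals $I_{\pi_i}$ throughout, and there $\mathit{cons}(\pi_i,\pi)$ forces $O_{\pi_i} = f(I_{\pi_i})$, i.e. $\pi_i$ is a trace of $f$; the pairwise conjuncts guarantee the witnesses are mutually consistent. Moving $Q q_1 \ldot \ldots Q q_n$ inward over the $q$-free dependency formulas is sound, so the scoping matches $\varphi$.

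The main obstacle is the correctness of the reduction in the presence of propositional quantification, i.e. proving equirealizability of $\varphi$ and $\varphi_\mathit{dist}$ in both directions. The key observation is that the uniform-across-traces semantics of a HyperQPTL propositional quantifier corresponds precisely to a single QPTL quantifier over the \emph{combined} trace: one sequence $t_q \in (\pow{\set{q}})^\omega$ is shared by all copies simultaneously, and since $\psi$ only refers to the copies for $\pi, \pi_1, \ldots, \pi_m$, the uniformity imposed on the remaining traces of $T$ is irrelevant. Given a HyperQPTL strategy $f$ one obtains a distributed realizer by letting $p_\mathit{sys}$ implement $f$ and $p_\mathit{wit}$ output the witness traces of $f$; conversely, a distributed realizer yields $f$ from $p_\mathit{sys}$, and the dependency conjuncts certify that the $p_\mathit{wit}$-outputs are genuine traces of $f$, so they serve as existential witnesses -- in both directions the $q$-sequences transfer unchanged. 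Decidability then follows from decidability of distributed QPTL realizability for architectures without information forks~\cite{conf/lics/FinkbeinerS05}.
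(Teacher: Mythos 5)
Your proposal is correct and follows essentially the same route as the paper: a reduction to distributed QPTL realizability in a fork-free architecture with an uninformed witness process that fixes the existential traces, plus the release-based consistency formula $(I_{\pi_i} \neq I)\R(O_{\pi_i} = O)$ tying the witnesses to the synthesized strategy tree, while the propositional quantifier block is kept over the matrix. The only deviations are cosmetic: your additional pairwise consistency conjuncts between witnesses are redundant (they already follow from each witness being tied to the same strategy), and your explicit argument for why the $q$-quantifiers transfer soundly is spelled out in more detail than in the paper's proof.
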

	\begin{proof}
		The reasoning generalizes the proof in~\cite{conf/cav/FinkbeinerHLST18} showing that realizability $\exists_\pi^* \forall_\pi$ HyperLTL formulas is decidable. We reduce the problem to the distributed realizability problem of QPTL without information forks, which is --- since QPTL is subsumed by the $\mu$-calculus --- decidable~\cite{conf/lics/FinkbeinerS05}. Let a HyperQPTL formula $\varphi = \exists\pi_1 \ldot \ldots  \exists \pi_n \ldot \forall \pi \ldot \psi$ over $AP = I \cupdot O$ be given, where $\psi$ is from the $Q^*_q$ fragment.
		We define a distributed architecture $\mathcal{A}$ over an extended set of atomic propositions $\ap' = I \cup O \cup I' \cup O'$. Similarly to the proof in Theorem~\ref{thm:dec_no_univ_trace_quantifier}, $I'$ and $O'$ are composed of a copy of the atomic propositions for each existentially quantified variable $\pi_j$. Formally, $I' = \bigcup_{1 \leq j \leq n} \set{i_{\pi_j} \mid i \in I}$ and $O' = \bigcup_{1 \leq j \leq n} \set{o_{\pi_j} \mid o \in O}$.
		Now we define $\mathcal{A}$ as follows.
		\begin{align*}
			\mathcal{A} &\coloneqq \langle (p_\mathit{env}, p_1, p_2), p_\mathit{env}, \mathcal{I}, \mathcal{O}, \rangle \\
			\mathcal{I} &\coloneqq  (p_1 \mapsto \emptyset, p_2 \mapsto I) \\
			\mathcal{O} &\coloneqq (p_\mathit{env} \mapsto I, p_1 \mapsto I' \cup O', p_2 \mapsto O)
		\end{align*}
		\begin{figure}[t]
			\centering
			\begin{tikzpicture}[->,>=stealth',shorten >=1pt,auto,semithick,scale=1,transform shape,scale=0.8]
				\tikzstyle{every state}=[shape=rectangle]
				\tikzstyle{envstate}=[shape=circle,scale=.95]
				
				\node [state,envstate] (e) {$env$};
				\node [state, below =0.8 of e] (a) {$p_2$};
				\node [state, left=1 of a] (b) {$p_1$};
				\path[->]
				(e) edge node [label,above right] {$I$} (a)
				(a) edge node [label, right] {$O$} +(0,-1.3)
				(b) edge node [label, left] {$I' \cup O'$} +(0,-1.3)
				;
			\end{tikzpicture}
			\caption{Distributed architecture encoding existential choice of traces.}
			\label{fig:architecture-proof}
		\end{figure}
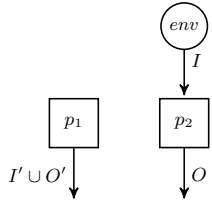
		The architecture is displayed in Fig.~\ref{fig:architecture-proof}.
		The idea is that process $p_1$ sets the values of all $i_{\pi_j}$ and $o_{\pi_j}$ (for $j \leq n$) and thereby determines the choice for the existentially quantified traces. Process $p_1$ receives no input and therefore needs to make a deterministic choice. Process $p_2$ then solves the realizability of formula $\forall \pi \ldot \psi$. The following QPTL formula $\varphi'$ encodes the idea.
		\begin{align*}
			\varphi' \coloneqq \psi' \land \lparen \bigwedge_{1 \leq j \leq n} ( I_{\pi_j} \neq I) \LTLrelease (O_{\pi_j} = O) \rparen \enspace ,
		\end{align*}
		where $\psi'$ is defined as $\psi$, where all $a_\pi$ are replaced by $a$ (but atomic propositions $a_{\pi_j}$ are still part of $\psi'$!). Note that QPTL formulas implicitly quantify over all traces universally.
		Similarly to the proof in Theorem~\ref{thm:dec_no_univ_trace_quantifier}, the second conjunct ensures that process $p_1$ encodes actual paths from the strategy tree of process $p_2$ (which is also the strategy tree for formula $\varphi$). Thus, $\varphi'$ is realizable for the distributed architecture $\mathcal{A}$ iff $\varphi$ is realizable. \qed
	\end{proof}
	To state the second lemma, we need to define what it means to replace quantifiers in a formula.
	%
		Let $\varphi = Q_{\pi/q}, \ldots , Q_{\pi/q} \ldot \psi$ be a HyperQPTL formula, and $J$ be a set of indices such that for all $j \in J$, there exists a propositional quantifier $\exists q_j$ or $\forall q_j$ in $\varphi$. Furthermore, assume that no $\pi_j$ with $j \in J$ occurs in $\varphi$ and that $a \in \ap$. We denote by $\varphi [J \hookrightarrow_a \pi]$ the formula where each propositional quantifier $\exists q_j$ (or $\forall q_j$, respectively) with $j \in J$ is replaced with the corresponding trace quantifier $\exists \pi_{j}$ (or $\forall \pi_{j}$, respectively); and each $q_j$ in $\psi$ is replaced by $\mathit{a}_{\pi_{j}}$.

	\begin{lemma}
		\label{lem2:dec_single_trace_quantifier}
		Let any HyperQPTL formula $\varphi$ over $\ap = I \cupdot O$ and a set of indices $J$ be given. If $\varphi [J \hookrightarrow_i \pi]$ is realizable, then so is $\varphi$, where $i \in I$ is an arbitrary input, assuming w.l.o.g., that $I$ is non-empty.
	\end{lemma}
	\begin{proof}
		Let $\varphi$ and $J$ be given. Formula $\varphi [J \hookrightarrow_i \pi]$ replaces the quantification over sequences $(2^{\{q\}})^\omega$ with trace quantification, where the trace is only used for statements about a single input $i$. We thus exploit the fact that in the realizability problem, there is a trace for every input sequence. Therefore, the transformed formula is equirealizable. \qed
	\end{proof}
	Now, we have everything we need to prove Theorem~\ref{thm:dec_single_trace_quantifier}.
	\begin{proof}[of Theorem~\ref{thm:dec_single_trace_quantifier}]
		Let $\varphi$ be a HyperQPTL formula of the $\exists_{q/\pi}^*\forall_q^*\forall_{\pi}Q_q^*$ fragment. First, observe that in the quantifier prefix of $\varphi$, the $\forall_q^*$ quantifiers and the $\forall_{\pi}$ can be swapped. The resulting formula belongs to the $\exists_{q/\pi}^*\forall_{\pi}Q_q^*$ fragment. By Lemma~\ref{lem2:dec_single_trace_quantifier}, the formula can be transformed to a equirealizable formula of the $\exists_{\pi}^*\forall_{\pi}Q_q^*$ fragment, for which realizability is decidable by Lemma~\ref{lem1:dec_single_trace_quantifier}. \qed
	\end{proof}
	Lemma~\ref{lem2:dec_single_trace_quantifier} allows us to decide realizability of a HyperQPTL formula by replacing propositional quantifiers with trace quantifiers. Thus, we can reduce HyperQPTL realizability to HyperLTL realizability, a fact that we use in Section~\ref{sec:Experiments} to describe a bounded synthesis algorithm for HyperQPTL.
	\begin{corollary} \label{cor:hyperqptl_2_hyperltl}
		The realizability problem of HyperQPTL can be soundly reduced to the realizability problem of HyperLTL.
	\end{corollary}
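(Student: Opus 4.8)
The plan is to instantiate Lemma~\ref{lem2:dec_single_trace_quantifier} with the maximal index set. Given an arbitrary HyperQPTL formula $\varphi$ over $\ap = I \cupdot O$, assume without loss of generality that $I \neq \emptyset$ (otherwise one adds a dummy input). Let $J$ be the set of \emph{all} indices of propositional quantifiers appearing in $\varphi$, and fix some $i \in I$. Applying the transformation yields $\varphi[J \hookrightarrow_i \pi]$, in which every propositional quantifier $\exists q_j$ (respectively $\forall q_j$) has been rewritten to the trace quantifier $\exists \pi_j$ (respectively $\forall \pi_j$), and every occurrence of $q_j$ in the matrix has been replaced by $i_{\pi_j}$.

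First I would check that $\varphi[J \hookrightarrow_i \pi]$ is a syntactically well-formed HyperLTL formula. After the transformation no propositional quantifier remains, so the prefix consists solely of trace quantifiers, and every atomic proposition $i_{\pi_j}$ in the matrix is indexed by a trace variable bound in that prefix. This only requires that the freshly introduced trace variables $\pi_j$ do not clash with those already occurring in $\varphi$, which is exactly the side condition built into the definition of $[J \hookrightarrow_i \pi]$. Since HyperLTL is precisely the propositional-quantifier-free fragment of HyperQPTL, $\varphi[J \hookrightarrow_i \pi]$ is a genuine HyperLTL formula over the same input/output partition (the new indexed copies $i_{\pi_j}$ inherit their classification from $i$).

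The reduction is then immediate from Lemma~\ref{lem2:dec_single_trace_quantifier}: whenever $\varphi[J \hookrightarrow_i \pi]$ is realizable, so is $\varphi$. Hence checking realizability of the HyperLTL formula $\varphi[J \hookrightarrow_i \pi]$ is a \emph{sound} procedure for realizability of $\varphi$, which is what the corollary asserts.

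The point requiring care — and the reason the statement only claims \emph{soundness} — is that the reduction is one-directional and need not be complete. This is inherent to Lemma~\ref{lem2:dec_single_trace_quantifier}, which transfers realizability only from the trace-quantified formula back to the propositionally-quantified one. The intuition is that an existential trace quantifier ranging over the full strategy tree can select \emph{any} input sequence for $i$, giving at least the freedom of an existential propositional quantifier over $(2^{\{q\}})^\omega$; but a universal trace quantifier additionally ranges over traces that must respect the outputs fixed by the strategy, so the rewritten universal quantifiers are in general strictly more constraining, and realizability does not transfer in the forward direction. For the bounded synthesis algorithm of Section~\ref{sec:Experiments} soundness is exactly the property needed, since any implementation synthesized for $\varphi[J \hookrightarrow_i \pi]$ is then guaranteed to realize the original HyperQPTL specification $\varphi$.
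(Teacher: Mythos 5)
Your reduction is exactly the paper's: the corollary is obtained by instantiating Lemma~\ref{lem2:dec_single_trace_quantifier} with the set of \emph{all} propositional quantifier indices, which eliminates every propositional quantifier in favor of a trace quantifier and leaves a genuine HyperLTL formula, and soundness is then precisely the implication stated in the lemma.

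One correction to your closing paragraph, however: your explanation of why the reduction should fail to be complete is not right. In $\varphi[J \hookrightarrow_i \pi]$ the fresh trace variables $\pi_j$ occur only in atoms $i_{\pi_j}$, so the outputs along those traces are never mentioned, let alone constrained; and since a strategy tree branches on every input valuation at every step, the $i$-projections of its traces are exactly $(2^{\{i\}})^\omega$. Hence universal trace quantification, restricted as it is here to the input projection, has exactly the same force as the universal propositional quantification it replaces --- it is neither weaker nor ``strictly more constraining.'' This is why the paper's proof of Lemma~\ref{lem2:dec_single_trace_quantifier} in fact asserts that the transformed formula is \emph{equirealizable}, even though the lemma statement records only the backward implication. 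The word ``soundly'' in the corollary reflects that recorded direction (the one needed to justify the synthesis procedure of Section~\ref{sec:Experiments}), not an intrinsic one-sidedness of the construction.
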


	Lastly, we show that the decidable fragment is tight in the class of formulas with a single universal trace quantifier. We do so by showing that a propositional $\forall^*_q \exists^*_q$ quantifier alternation followed by a single trace quantifier $\forall_{\pi}$ leads to an undecidable realizability problem. The proof is carried out by a reduction from Post's Correspondence Problem.
	\begin{theorem}
		Realizability is undecidable for HyperQPTL formulas with a single $\forall_\pi$ quantifier outside the $\exists_{q/\pi}^*\forall_q^*\forall_{\pi}Q_q^*$ fragment.
	\end{theorem}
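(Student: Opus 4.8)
The plan is to reduce Post's Correspondence Problem (PCP) to the realizability problem, producing for every PCP instance a HyperQPTL formula whose quantifier prefix is $\forall_q^* \exists_q^* \forall_\pi$ --- that is, a propositional $\forall\exists$ alternation placed in front of the single universal trace quantifier. Such a prefix is not of the form $\exists_{q/\pi}^*\forall_q^*\forall_\pi Q_q^*$, because the existential propositional quantifiers occur behind universal propositional ones while the only trace quantifier is the universal $\forall_\pi$; hence the witness formula lies just outside the decidable fragment and witnesses its tightness. Recall that with no $\forall_\pi$ at all, realizability collapses to QPTL realizability (Theorem~\ref{thm:dec_no_univ_trace_quantifier}) and is decidable, so the single $\forall_\pi$ is essential and the hardness must come from its interplay with the propositional alternation.

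Fix a PCP instance $(u_1,v_1),\dots,(u_k,v_k)$ over an alphabet $\Sigma$. First I would design the output alphabet $O$ so that a strategy is forced to commit to a candidate index sequence $i_1 i_2 \cdots$ together with the two expanded rows $u_{i_1}u_{i_2}\cdots$ and $v_{i_1}v_{i_2}\cdots$, each annotated with tile-boundary markers, and with a designated output flagging where the candidate solution ends. Local well-formedness of this encoding (each block on the $u$- and $v$-tracks spells a genuine $u_i$ / $v_i$, boundaries are consistent, a solution is eventually declared) is an LTL property of a single trace and goes into the quantifier-free body $\psi$ evaluated under $\forall_\pi$. The genuinely global obligation --- that the top and bottom rows spell the same finite word up to the declared end, even though the two tracks advance at different rates --- is exactly what I would offload onto the propositional alternation.

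For the equality test I would let the universal proposition $q$ single out an arbitrary character position (encoded as a one-shot pulse), and let the existential proposition $q'$ answer with the matching position in the other row; the body $\psi$ then asserts, using only LTL counting between the $q$- and $q'$-pulses and the boundary markers, that the two selected symbols coincide and that $q'$ indeed points at the character index corresponding to $q$. Since $q$ ranges over \emph{all} positions and $q'$ may depend on $q$, this certifies character-wise equality of the two rows; the $\forall_\pi$ ensures the check is carried out against the trace actually produced, while the environment inputs serve to prevent the strategy from tailoring its answer to a particular $q$-challenge, forcing one uniform and genuinely correct encoding. Thus the formula is realizable iff some strategy outputs a valid PCP solution, iff the instance is solvable.

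I expect the main obstacle to be the faithful reduction of the \emph{global} equality $u_{i_1}\cdots u_{i_n}=v_{i_1}\cdots v_{i_n}$ to the \emph{local}, LTL-expressible comparisons that $\psi$ is allowed to make: one must guarantee that the single $\forall_q\exists_q$ alternation cannot be satisfied by a spurious pointer assignment and that unequal rows are always refuted by some adversarial choice of $q$. Getting the pulse and counting discipline right --- so that $q'$ is provably forced to the index matching $q$ rather than to a convenient mismatch --- together with arguing that the propositional quantifiers, which color all traces uniformly, interact correctly with the input-indexed strategy tree, is where the real work lies.
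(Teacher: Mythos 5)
Your high-level plan---a PCP reduction targeting the $\forall_q^*\exists_q^*\forall_\pi$ prefix---matches the shape of the paper's argument, but the mechanism you propose for the one genuinely non-regular obligation does not work, and you yourself flag it as unresolved rather than solving it. In your encoding the two rows $u_{i_1}u_{i_2}\cdots$ and $v_{i_1}v_{i_2}\cdots$ advance at different rates, so ``$q'$ points at the character index corresponding to $q$'' is an equicardinality constraint: the number of top-row characters strictly before the $q$-pulse must equal the number of bottom-row characters strictly before the $q'$-pulse. This is not an $\omega$-regular property of the trace together with the two pulse sequences, hence it cannot be expressed by the quantifier-free body $\psi$---nor by any amount of extra propositional quantification, since QPTL captures exactly the $\omega$-regular languages. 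Moreover, nothing enforces consistency of the existential responses across different universal challenges: the response $q'$ is chosen anew for each challenge $q$, so the existential player may point at \emph{any} position of the other row where the letters happen to agree, and local adjacency or counting-up-to-a-constant checks cannot rule out such spurious pointer assignments (there is no anchor that propagates correctness from one challenge to the next). So the claim ``realizable iff the PCP instance is solvable'' breaks exactly at the step you identified as ``where the real work lies.''

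The paper overcomes this obstacle with a different device: it never compares positions across the two rows. Instead it reuses the $\forall\pi\exists\pi'$ construction that makes HyperLTL realizability undecidable~\cite{conf/concur/FinkbeinerH16,conf/cav/FinkbeinerHLST18}, and uses the propositional blocks to \emph{simulate those two trace quantifiers}: the universally quantified propositions $(q_i,\vec{q})$ stand for an arbitrary trace, the existentially quantified ones $(p_i,\vec{p})$ for a successor trace, and the single $\forall\pi$ serves only to tie both tuples to actual traces of the strategy tree, via the premise $\G\, (\pi = (q_i,\vec{q}))$ and the conjunct $(\G\, \pi = p_i) \rightarrow (\G\, \pi = \vec{p})$, exploiting that the tree contains a trace for every input sequence. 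Equality of the two PCP rows is then enforced by a chain of ``relevant'' traces, each obtained from the previous one by deleting the first stone---a shift by a \emph{constant} amount on each track, hence LTL-expressible between the $\vec{q}$- and $\vec{p}$-propositions---anchored by the solution trace (on which the pair encoding must agree componentwise) and forced to terminate in the all-$\#$ trace; induction along this chain replaces the unbounded counting your pulse game would need. Finally, note that your witness covers only one way of leaving the decidable fragment: the theorem also requires undecidability when an $\exists_\pi$ follows the single $\forall_\pi$, which the paper discharges separately by inheritance from HyperLTL.
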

	\begin{proof}
		Inherited from HyperLTL, realizability of formulas with a $\forall_\pi$ quantifier followed by an $\exists_\pi$ quantifier is undecidable~\cite{conf/cav/FinkbeinerHLST18}. It remains to show that realizability of formulas from the $\forall_q^*\exists_q^*\forall_\pi$ fragment is in general undecidable. We give a reduction from  Post's Correspondence Problem (PCP)~\cite{post} to a HyperQPTL formula from the $\forall_q^* \exists_q^* \forall_\pi$ fragment.
		In PCP, we are given two equally long lists $\alpha$ and $\beta$ consisting of finite words from some alphabet $\Sigma$ of size $n$.
		PCP is the problem to find an index sequence $(i_k)_{1 \leq k \leq K}$ with $K \geq 1$ and $1 \leq i_k \leq n$, such that $\alpha_{i_1}\dots\alpha_{i_K}=\beta_{i_1}\dots\beta_{i_K}$.
		Intuitively, PCP is the problem of choosing an infinite sequence of domino stones (with finitely many different stones), where each stone consists of two words $\alpha_i$ and $\beta_i$.
		Let a PCP instance with $\Sigma = \{a_1,a_2,..., a_n\}$ and two lists $\alpha$ and $\beta$ be given.
		We choose our set of atomic propositions as follows: $\ap \coloneqq I \cupdot O$ with $I := \{i\}$  and $O \coloneqq (\Sigma \cup \{\dot a_1, \dot a_2,... ,\dot a_n\} \cup {\#})^2$, where we use the dot symbol to encode that a stone starts at this position of the trace. We write $\tilde{a}$ to denote either $a$ or $\dot{a}$. The single input $i$ spans a binary strategy tree.
		We encode the PCP instance into a HyperQPTL formula that is realizable if and only if the PCP instance has a solution:	
		\begin{align*}
		\forall q_i \ldot \forall \vec{q} \ldot \exists p_i \ldot \exists \vec{p} \ldot \forall \pi .~
		&((\LTLsquare \pi = p_i) \rightarrow (\LTLsquare \pi = \vec{p})) ~ \wedge \\
		& ((\LTLsquare \pi = (q_i, \vec{q}) ) \rightarrow \varphi_\mathit{reduc}(q_i, \vec{q}, p_i, \vec{p})) \enspace ,
		\end{align*}
		where $\vec{q}$ and $\vec{p}$ are sequences of universally and existentially quantified propositional variables, such that for each $(o,o') \in O$, there is a $q_{(o,o')} \in \vec{q}$ and a $p_{(o,o')} \in \vec{p}$. Together with $q_i$ and $p_i$ for the input $i$, they simulate a universally and an existentially quantified trace from the model.
		The notation $\pi = \vec{q}$ denotes that for every $q_a \in \vec{q}$, it holds that $a_\pi \leftrightarrow q_a$.
		As seen before, the premise $(\LTLsquare \pi = (q_i, \vec{q}) )$ and the conjunct $(\LTLsquare \pi = p_i) \rightarrow (\LTLsquare \pi = \vec{p})$ ensure that the propositions $(q_i, \vec{q})$ and $(p_i, \vec{p})$ are  chosen to represent actual traces from the model. The universal quantification $\pi$ thus only ensures that $(q_i, \vec{q})$ and $(p_i, \vec{p})$, which are used for the main reduction, are chosen correctly.
		The reduction is implemented in the formula $\varphi_\mathit{reduc}$ and follows the construction in~\cite{conf/concur/FinkbeinerH16}, where it is shown that the satisfiability and realizability problem of HyperLTL are undecidable for a $\forall \exists$ trace quantifier prefix.
		\begin{align*}
		\varphi_\mathit{reduc}(q_i, \vec{q}, p_i, \vec{p}) := \;&\varphi_\mathit{rel}(q_i) \rightarrow \varphi_\mathit{is++}(q_i,p_i)\\
		&{} \wedge \varphi_\mathit{start}(\varphi_\mathit{stone\&shift}(\vec{q},\vec{p}),q_i) \wedge \varphi_\mathit{sol}(q_i,\vec{q})
		\end{align*}
		\begin{itemize}
			\item  $\varphi_\mathit{rel}(q_i) := \neg q_i \U \G q_i$ defines the set of \emph{relevant} traces trough the binary strategy tree (see Fig.~\ref{fig:relevant_traces}).
			\begin{figure}[t]
				\centering
				\tikzstyle{treenode} = [text width = 3em, inner sep=0pt, text centered]
				\begin{tikzpicture}[
				semithick,
				level 1/.style={level distance=1.5cm, sibling distance=4.5cm},
				level 2/.style={level distance=1cm, sibling distance=3cm},
				level 3/.style={level distance=0.75cm, sibling distance=2.25cm}
				]
				\pgfdeclarelayer{bg}   
				\pgfsetlayers{bg,main}
				
				\node[treenode] (L0) {$\circ$}
				child {
					node[treenode] (L1) {$\circ$}        
					child {
						node[treenode] (L11) {$\circ$}
							child{
							node[treenode] (L111) {$\circ$}
							edge from parent
							node[above] {$i$}
							}
							child{
								node[treenode] (L112) {$\circ$}
								edge from parent
								node[above] {$\bar{i}$}
							}
					edge from parent
					node[above]{$i$}
					}
					child {
						node[treenode] (L12) {$\circ$}
						edge from parent
						node[above] {$\bar{i}$}
					}
					edge from parent 
					node[above] {$i$}
				}
				child {
					node[treenode] (L2) {$\circ$}        
					child {
						node[treenode] (L21) {$\circ$}
						child{
							node[treenode] (L211) {$\circ$}	
							edge from parent
							node[above]{$i$}
						}
						child{
							node[treenode] (L212) {$\circ$}	
							edge from parent
							node[above]{$\bar{i}$}
						}
						edge from parent
						node[above] {$i$}
					}
					child {
						node[treenode] (L22) {$\circ$}
						child{
							node[treenode] (L221) {$\circ$}
							edge from parent
							node[above] {$i$}
						}
						child{
							node[treenode] (L222) {$\circ$}
							edge from parent
							node[above] {$\bar{i}$}
						}
						edge from parent
						node[above] {$\bar{i}$}
					}
					edge from parent         
					node[above] {$\bar{i}$}
				};
			
				\begin{pgfonlayer}{bg}
					\draw[color=green!50, double = green!30, double distance=5mm, line cap=round] ([yshift=2pt, xshift=-2pt]L0.center) to ([yshift=2pt, xshift=-2pt]L111.center);
					
					\draw[color=green!50, double = green!30, double distance=5mm, line cap=round] ([yshift=2pt, xshift=-2pt]L2.center) to ([yshift=2pt, xshift=-2pt]L211.center);
					
					\draw[color=green!50, double = green!30, double distance=5mm, line cap=round] ([yshift=2pt, xshift=-2pt]L22.center) to ([yshift=2pt, xshift=-2pt]L221.center);
				\end{pgfonlayer}

				\end{tikzpicture}
				\caption{A sketch of the strategy tree of our PCP reduction: relevant traces are marked in green.}
				\label{fig:relevant_traces}
			\end{figure}
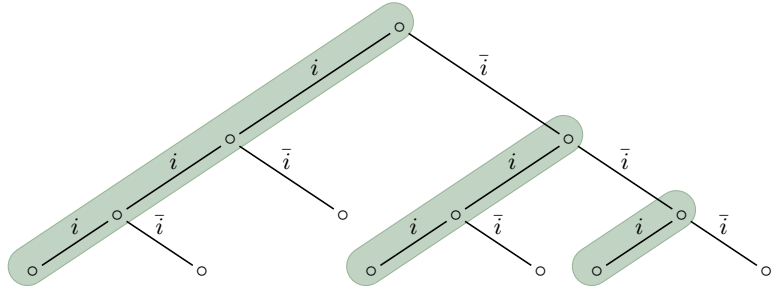
			
			\item $\varphi_\mathit{is++}(q_i,p_i) := (\neg q_i \wedge \neg p_i){} \U{} (\G q_i \wedge \neg p_i \wedge \X \G p_i)$ defines that a relevant trace is the direct successor trace of another relevant trace.
			
			\item $\varphi_{\text{sol}}(q_i,\vec{q}) \coloneqq \G q_i \rightarrow ((\bigvee_{i=1}^{n}q_{(\dot a_i, \dot a_i)}) \wedge (\bigvee_{i=1}^{n}q_{(\tilde a_i, \tilde a_i)}))$ $\U \G q_{(\#, \#)}$ ensures that the path on which globally $i$ holds is a ``solution'' trace, i.e., encodes the PCP solution sequence.
			
			\item $\varphi_\mathit{start}(\varphi, q_i):= \neg q_i \U (\varphi \wedge \G q_i)$ cuts off an irrelevant prefix until $\varphi$ starts.
			
			\item $\varphi_\mathit{stone\&shift}(\vec{q},\vec{p})$ encodes that the trace simulated by $\vec{q}$ starts with a valid encoding of a stone from the PCP instance and that the trace simulated by $\vec{p}$ encodes the same trace but with the first stone removed (see~\cite{conf/concur/FinkbeinerH16}). 
		\end{itemize}
		For example, let $\alpha$ with $\alpha_1 = a$, $\alpha_2 = ab$, $\alpha_3 = bba$, and $\beta$ with $\beta_1 = baa$, $\beta_2 = aa$ and $\beta_3 = bb$ be given. A possible solution for this PCP instance is be $(3,2,3,1)$, since $bbaabbbaa = i_\alpha = i_\beta$.
		The full sequence at the trace $\G i$ represents the solution with the outputs 
		$$(\dot b, \dot b)(b, b)(a, \dot a)(\dot a, a)(b, \dot b)(\dot b, b)(b, \dot b)(a, a)(\dot a, a)(\#, \#)(\#, \#)\dots$$
		The next relevant trace, therefore, contains
		$$(\dot a, \dot a)(b, a)(\dot b, \dot b)(b, b)(a, \dot b)(\dot a, a)(\#, a)(\#, \#)(\#, \#)\dots$$
		Continuing this, the following relevant traces are:
		\begin{flalign*}
		&(\dot b, \dot b)(b, b)(a, \dot b)(\dot a, a)(\#, a)(\#, \#)(\#, \#)\dots\\
		&(\dot a, \dot b)(\#, a)(\#, a)(\#, \#)(\#, \#)\dots\\
		&(\#, \#)(\#, \#)\dots
		\end{flalign*}		
		The relevant traces verify the solution provided on the $\G i$ trace by removing one stone after the other. Thus, the formula is realizable iff the PCP instance has a solution. \qed
	\end{proof}

	
	\subsection{Multiple Universal Trace Quantifiers}
	\label{sec:MultUnivTrace}
	When considering multiple universal trace quantifiers $\forall^*_\pi$, the problem becomes undecidable. This is because in HyperLTL, one can encode distributed architectures -- for which the problem is undecidable -- directly into the formula without using any propositional quantification~\cite{conf/cav/FinkbeinerHLST18}.
	\begin{corollary}
		Realizability of the $\forall^*_\pi$ fragment is in general undecidable.
	\end{corollary}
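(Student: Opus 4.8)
The plan is to observe that the $\forall^*_\pi$ fragment already subsumes the purely universal fragment of HyperLTL and then to invoke the known undecidability of realizability there. Since HyperLTL is exactly the propositional-quantifier-free subset of HyperQPTL, every HyperLTL formula of the form $\forall \pi_1 \ldots \forall \pi_m \ldot \psi$ with $\psi$ an LTL body is a HyperQPTL formula lying in the $\forall^*_\pi$ fragment, and on such formulas the HyperQPTL and HyperLTL realizability problems coincide. Hence it suffices to show that realizability is already undecidable for $\forall^*$ HyperLTL formulas; the corollary then follows immediately by inclusion.

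To establish the HyperLTL case I would reuse the reduction from the distributed realizability problem given in~\cite{conf/cav/FinkbeinerHLST18}. First, I fix an architecture containing an information fork, for instance the one of Fig.~\ref{fig:informationFork1}, where process $p$ must compute its output $o$ from input $i$ alone and $p'$ must compute $o'$ from $i'$ alone; distributed synthesis for such an architecture is undecidable by Pnueli and Rosner~\cite{conf/focs/PnueliR90}, as reflected by the information-fork characterization of~\cite{conf/lics/FinkbeinerS05}. The key step is to express the two information restrictions as a universal hyperproperty: using two universally quantified traces $\pi$ and $\pi'$, one states that whenever the $i$-inputs of $\pi$ and $\pi'$ have agreed up to the current step then their $o$-outputs must agree, and symmetrically for $i'$ and $o'$. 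Conjoining these constraints with the LTL specification to be realized yields a $\forall\forall$ HyperLTL formula whose centralized strategy trees correspond exactly to the admissible distributed implementations.

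The technically delicate point -- which I would cite rather than redo -- is the correctness of this encoding: one must check that a single (fully informed) strategy satisfying the universal formula is in bijective correspondence with a tuple of process strategies each respecting its information restriction, so that the centralized formula is realizable precisely when the distributed architecture is. This equivalence is established in~\cite{conf/cav/FinkbeinerHLST18}. Granting it, undecidability of distributed synthesis for an architecture with an information fork transfers to realizability of the resulting $\forall^*$ HyperLTL formula, and therefore to the $\forall^*_\pi$ HyperQPTL fragment, establishing the claim.
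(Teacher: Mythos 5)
Your proposal is correct and follows essentially the same route as the paper: the paper likewise justifies the corollary by noting that $\forall^*$ HyperLTL formulas (a syntactic subset of the $\forall^*_\pi$ HyperQPTL fragment) can directly encode distributed architectures with information forks, whose synthesis problem is undecidable, citing~\cite{conf/cav/FinkbeinerHLST18}. You merely spell out the dependency-constraint encoding that the paper leaves implicit in its citation.
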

	%
	However, we show that the realizability problem for formulas with more than one universal trace quantifier is decidable if we restrict ourselves to formulas in the so-called \emph{linear fragment}, i.e., that does not allow an encoding of a distributed architecture. We define the linear fragment of HyperQPTL, where the definitions are adopted from~\cite{conf/cav/FinkbeinerHLST18}.
	
		Let $A,C \subseteq \ap$. We define that atomic propositions $c \in C$ do solely depend on propositions $a \in A$ as the HyperQPTL formula
		\begin{equation*}
		\dep{A}{C}\coloneqq
		\forall \pi \forall \pi' \ldot
		\left(
		\bigvee_{a \in A} (a_\pi \nleftrightarrow a_{\pi'})
		\right)
		\R
		\left(
		\bigwedge_{c \in C} ( c_\pi \leftrightarrow c_{\pi'})
		\right)
		\enspace.
		\end{equation*}
	%
		We define a \emph{collapse} function, which collapses a HyperQPTL formula with a $\forall_\pi^*$ universal quantifier prefix into a formula with a single $\forall_\pi$ quantifier. Propositional quantifiers are preserved by the operation.
		Let $\varphi$ be $ \forall \pi_1 \cdots \forall \pi_n \ldot Q_q^* \ldot \psi$.
		We define the collapsed formula of $\varphi$ as $\collapse(\varphi) \coloneqq \forall \pi \ldot Q_q^* \ldot \psi[\pi_1 \mapsto \pi][\pi_2 \mapsto \pi]\dots[\pi_n \mapsto \pi]$ where $\psi[\pi_i \mapsto \pi]$ replaces all occurrences of $\pi_i$ in $\psi$ with $\pi$.
	
	\begin{lemma}
		Either $\varphi \equiv \collapse(\varphi)$ or $\varphi$ has no equivalent $\forall_\pi^1 \ldot Q_q^*$ formula.
	\end{lemma}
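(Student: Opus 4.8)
The plan is to prove the dichotomy by analyzing when the collapse operation loses information. The statement says that for a $\forall_\pi^* Q_q^*$ formula $\varphi$, collapsing it to a single universal trace quantifier is either semantically harmless ($\varphi \equiv \collapse(\varphi)$) or else no single-$\forall_\pi$ formula can capture $\varphi$ at all. I would first establish the easy implication: $\collapse(\varphi)$ always \emph{implies} $\varphi$ as a refinement, since any model of $\collapse(\varphi)$ assigns the same trace to every copy $\pi_1,\dots,\pi_n$, so the diagonal instantiation is just a special case of the universal quantification in $\varphi$. Hence $\varphi \models \collapse(\varphi)$ is the nontrivial direction, and equivalence fails exactly when there is a trace set $T$ with $T \models \varphi$ but $T \not\models \collapse(\varphi)$.

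The key step is to show that \emph{if} equivalence fails, then $\varphi$ has no equivalent single-$\forall_\pi$ formula whatsoever. I would argue this by contradiction: suppose $\varphi \equiv \chi$ for some $\chi = \forall \pi \ldot Q_q^* \ldot \psi'$. The intuition is that a single-$\forall_\pi$ formula can only constrain traces \emph{one at a time} (modulo the uniform propositional quantification, which by the HyperQPTL semantics colors all traces identically and therefore cannot distinguish individual traces from one another). Formally, I would prove a closure property for single-$\forall_\pi$ formulas: their models are closed under a suitable operation that builds larger trace sets from the individual traces satisfying the body. Concretely, if $T \models \chi$ then every trace $t \in T$, viewed as a singleton-like witness, satisfies the body; and conversely any trace set whose every trace meets this per-trace condition satisfies $\chi$. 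This lets me characterize models of $\chi$ as being determined purely by which individual traces they contain.

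The heart of the argument is then a ``swapping'' or ``shuffling'' construction. Given a witness $T$ with $T \models \varphi$ but $T \not\models \collapse(\varphi)$, the failure of $\collapse(\varphi)$ means there is a single trace $t^* \in T$ such that the diagonal assignment $\pi_1,\dots,\pi_n \mapsto t^*$ falsifies the body, even though every genuine $n$-tuple drawn from $T$ satisfies it. I would use $t^*$ to construct a trace set $T'$ that every single-$\forall_\pi$ formula cannot separate from $T$ but on which $\varphi$ and $\collapse(\varphi)$ disagree --- typically by taking $T' = \{t^*\}$ or a set in which $t^*$ appears together with traces forcing the diagonal tuple. Since $\chi$ is a single-$\forall_\pi$ formula, $T \models \chi$ forces $t^*$ to satisfy the body, hence $T' \models \chi$; but $T' \models \varphi$ requires all $n$-tuples over $T'$, including the diagonal $(t^*,\dots,t^*)$, to satisfy the body, which fails. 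This contradicts $\varphi \equiv \chi$.

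The main obstacle is handling the interleaved propositional quantifiers $Q_q^*$ cleanly. Because propositional quantification in HyperQPTL is \emph{uniform} (every trace agrees on the quantified $q$-sequence, per the semantics $\pathassign, i \models_T \exists q \ldot \varphi$ iff $\exists t_q \ldot \pathassign, i \models_{T[q \mapsto t_q]} \varphi$), the chosen $q$-sequences are shared across all trace copies and do not increase the ability of a single $\forall_\pi$ to correlate distinct traces. I would need to check carefully that the swapping construction respects this uniformity --- i.e., that a $q$-witness working for $T$ can be reused for $T'$ --- and that replacing $\pi_1,\dots,\pi_n$ by a single $\pi$ in the body interacts correctly with the positions of the $Q_q^*$ block. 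Making the per-trace closure characterization precise in the presence of these shared propositional colorings is where the real work lies; the combinatorial swap itself is then routine.
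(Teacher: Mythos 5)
There is a genuine direction error that breaks the core of your argument. For a $\forall_\pi^*$ formula $\varphi$, the entailment $\varphi \models \collapse(\varphi)$ \emph{always} holds: the diagonal assignment $\pi_1 = \dots = \pi_n = t$ is one particular instantiation of the universal quantifiers of $\varphi$, so if every $n$-tuple of traces satisfies the body, then in particular every diagonal tuple does. (Your first paragraph even gives this justification, but then draws the reversed conclusion and declares $\varphi \models \collapse(\varphi)$ the ``nontrivial direction''.) Consequently, equivalence fails exactly when some $T$ satisfies $\collapse(\varphi)$ but \emph{not} $\varphi$ --- the opposite of what you assume. The witness on which the ``heart'' of your proof rests, a set $T$ with $T \models \varphi$, $T \not\models \collapse(\varphi)$, containing a trace $t^*$ whose diagonal tuple falsifies the body ``even though every genuine $n$-tuple drawn from $T$ satisfies it'', is self-contradictory: the diagonal tuple \emph{is} one of the genuine $n$-tuples over $T$. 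Your contradiction is therefore derived from a hypothesis that can never occur, and the case that actually witnesses non-equivalence is never treated.

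The good news is that your second paragraph already isolates the right key property, and reorienting the witness repairs the proof. Models of a single-$\forall_\pi$ formula $\chi = \forall \pi \ldot Q_q^* \ldot \psi$ are determined trace-wise (the body sees only the $a_\pi$-atoms of the one quantified trace and the quantified $q$-sequences, which are chosen separately for each instantiation of $\pi$), hence they are downward closed and closed under unions. Now suppose $\varphi \equiv \chi$ and let $T \models \collapse(\varphi)$. Since $\collapse(\varphi)$ is itself a single-$\forall_\pi$ formula, each singleton $\set{t}$ with $t \in T$ satisfies $\collapse(\varphi)$; on singletons $\varphi$ and $\collapse(\varphi)$ coincide, because the diagonal assignment is the only available one; so $\set{t} \models \varphi \equiv \chi$, and by union closure $T \models \chi \equiv \varphi$. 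Hence $\collapse(\varphi) \models \varphi$, which together with the trivial direction gives $\varphi \equiv \collapse(\varphi)$ --- exactly the dichotomy claimed. Note that the paper itself does not spell this out: it invokes the corresponding HyperLTL result of~\cite{conf/cav/FinkbeinerHLST18} and argues that inner propositional quantification does not interfere with the trace-quantification mechanism. Your corrected argument would essentially be that cited proof made explicit, and your final paragraph's concern about the uniform $q$-coloring is precisely the ``does not interfere'' claim, so making it rigorous is a worthwhile addition rather than wasted effort.
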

	\begin{proof}
		The collapse function solely works on the trace quantification mechanism of the HyperQPTL formula, by reducing them to a single universal quantification. The theorem has been proven for $\forall^*$ HyperLTL formulas in~\cite{conf/cav/FinkbeinerHLST18}. Inner propositional quantification does not interfere with this mechanism, hence, the proof can be carried out identically. \qed
	\end{proof}

	Now we can formally define the linear $\forall_{\pi}^*$ fragment. Intuitively, we require that every input-output dependency can be ordered linearly, i.e., we are restricted to linear architectures without information forks (see Example~\ref{ex:info_fork}).
	\begin{definition}
		Let $O = \{o_1,\ldots, o_n\}$.
		A HyperQPTL formula $\varphi$ is called \emph{linear} if for all $o_i \in O$ there is a $J_i \subseteq I$ such that $\varphi \land \dep{I}{O} \equiv \collapse(\varphi) \land \bigwedge_{o_i \in O} \dep{J_i}{\set{o_i}}$ and $J_i \subseteq J_{i+1}$ for all $i \leq n$.
	\end{definition}
	This results in the following corollary. Since the universal quantifiers can be collapsed, the resulting problem is the realizability problem of QPTL in a linear architecture, which is decidable~\cite{conf/lics/FinkbeinerS05}.
	
	
	\begin{corollary}
		Realizability of the linear $\forall_\pi^* Q_q^*$ fragment is decidable.
	\end{corollary}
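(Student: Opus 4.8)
The plan is to reduce realizability of a linear $\forall_\pi^* Q_q^*$ formula to the distributed QPTL realizability problem over an architecture without information forks, which is decidable by~\cite{conf/lics/FinkbeinerS05}. Let $\varphi$ be a linear formula over $\ap = I \cupdot O$ with $O = \set{o_1, \ldots, o_n}$, and let $J_1 \subseteq \cdots \subseteq J_n \subseteq I$ be the input sets witnessing linearity.

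First I would observe that realizability of $\varphi$ coincides with realizability of $\varphi \land \dep{I}{O}$. Indeed, every strategy $f \colon \strat{I}{O}$ fixes each output valuation as a function of the input history, so any two traces of $f$ that agree on all inputs also agree on all outputs; hence the trace set of any strategy already satisfies $\dep{I}{O}$, and adding this conjunct changes nothing. By the definition of the linear fragment, $\varphi \land \dep{I}{O} \equiv \collapse(\varphi) \land \bigwedge_{o_i \in O} \dep{J_i}{\set{o_i}}$, so it suffices to decide realizability of the right-hand formula.

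Next I would translate the nested dependency constraints into a linear distributed architecture. For each output $o_i$ I introduce a process $p_i$ whose inputs are $J_i$ and whose single output is $o_i$, while the environment forwards the inputs $I$. Because $J_1 \subseteq \cdots \subseteq J_n$, the processes are linearly ordered by the subset relation on their inputs, so no two processes receive incomparable information and the architecture contains no information fork (compare Example~\ref{ex:info_fork}). The constraints $\dep{J_i}{\set{o_i}}$ are then enforced structurally by the architecture rather than by the formula: a distributed strategy in which $p_i$ reads only $J_i$ automatically makes $o_i$ depend solely on $J_i$. It therefore remains to realize $\collapse(\varphi)$ over this architecture. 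Since $\collapse(\varphi)$ carries only a single universal trace quantifier in front of its $Q_q^*$ propositional prefix, it is a genuine QPTL formula, because QPTL quantifies universally over all traces implicitly and its native propositional quantifiers capture the $Q_q^*$ prefix. Hence deciding realizability of $\collapse(\varphi)$ over the fork-free architecture is exactly an instance of distributed QPTL realizability, which is decidable by~\cite{conf/lics/FinkbeinerS05}.

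The main obstacle I expect is justifying the equirealizability in the step where the architecture replaces the dependency formulas, that is, showing that a strategy satisfying $\bigwedge_{o_i \in O} \dep{J_i}{\set{o_i}}$ can always be reorganized into a distributed strategy over the architecture and vice versa. The forward direction is immediate, but the converse requires that the nested structure of the $J_i$ lets one recover per-process strategies from a global strategy without introducing an information fork, which is precisely where the linearity assumption, together with the decidability result of~\cite{conf/lics/FinkbeinerS05} for fork-free architectures, becomes essential.
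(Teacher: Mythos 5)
Your proposal is correct and follows essentially the same route as the paper: the paper's own justification is exactly that, by the linearity definition, the universal trace quantifiers can be collapsed and the dependency conjuncts $\dep{J_i}{\set{o_i}}$ with nested $J_i$ turn the problem into QPTL realizability over a linear (fork-free) distributed architecture, which is decidable by the cited result of Finkbeiner and Schewe. Your write-up simply makes explicit the steps the paper compresses into one sentence --- the equirealizability of $\varphi$ and $\varphi \land \dep{I}{O}$, the construction of the architecture from the nested $J_i$, the extraction of per-process strategies from the dependency formulas, and the observation that $\collapse(\varphi)$ is a genuine QPTL formula under the implicit universal trace quantification.
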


\paragraph{Remark on Complexities.} Our aim was to work out the largest possible fragments for which the realizability problem of HyperQPTL remains decidable. The three fragments for which we could prove decidability all subsume the logic QPTL, for which the realizability problem is known to be non-elementary (already its satisfiability problem is non-elementary~\cite{DBLP:journals/tcs/SistlaVW87}). Hence, realizability of the discussed HyperQPTL fragments has a non-elementary lower bound. Finding interesting fragments for which the problem has a more feasible complexity therefore remains an open challenge.

\section{Experiments}
\label{sec:Experiments}
	\begin{table}[t]
	\centering
	\caption{Experimental results for prompt arbiter.}
	\begin{tabular}{l  l  l  l l}
		\noalign{\smallskip} \hline \noalign{\smallskip}
		instance & bound on system \hspace{.1cm} & bound on $\exists$-strategy \hspace{.1cm}  &  result \hspace{.1cm} & time [sec.] \\
		\noalign{\smallskip} \hline\noalign{\smallskip}
		arbiter-2-prompt & $2$ & $1$ &  unsat & $<1$ \\
		& $2$ & $2$& sat & $<1$ \\
		arbiter-2-full-prompt & $3$ & $1$& unsat & $2.4$ \\
		& $3$ & $2$ & sat &$ 6.0$ \\
		arbiter-3-prompt & $3$& $1$ & unsat & $4.2$ \\
		& $3$ & $2$ & sat & $9.5$ \\
		%
		%
		arbiter-4-prompt & $4$ & $1$ & unsat & $97$ \\
		& $4$ & $2$ & ? & TO \\
	\end{tabular}
	\label{tbl:results}
\end{table}

We have implemented a prototype tool that can solve the HyperQPTL realizability problem using the bounded synthesis approach~\cite{journals/sttt/FinkbeinerS13}.
More concretely, we extended the HyperLTL synthesis tool BoSy~\cite{conf/cav/FaymonvilleFT17,conf/cav/FinkbeinerHLST18,conf/cav/CoenenFST19}. Bosy reduces the HyperLTL synthesis problem to a SMT constraint system which is then solved by $\zthree$~\cite{z3} (for more see~\cite{conf/cav/FinkbeinerHLST18}). We implemented the reduction of HyperQPTL synthesis to HyperLTL synthesis (Corollary~\ref{cor:hyperqptl_2_hyperltl}) in BoSy, such that the tool can also handle HyperQPTL formulas.
We evaluated the tool against a range of benchmarks sets, shown in Table~\ref{tbl:results}. The first column indicates the parameterized benchmark name. The second and third columns indicate the bounds given to the bounded synthesis procedure. The second column is the bound on the size of the system. The newest version of BoSy also bounds the size of the strategy for the existential player, this bound is given in column three. For a detailed explanation of how existential strategies are bounded in BoSy, we refer to~\cite{conf/cav/CoenenFST19}.

We synthesized a range of resource arbiters. Our benchmark set is parametric in the number of clients that can request access to the shared resource (written arbiter-$k$-prompt where $k$ is the number of clients in Table~\ref{tbl:results}). Unlike normal arbiters, we require the arbiter to fulfill promptness for some of the clients, i.e., requests must be answered within a bounded number of steps~\cite{journals/corr/TentrupWZ15}. We state the promptness requirement in HyperQPTL by applying the \emph{alternating-color technique} from~\cite{journals/fmsd/KupfermanPV09}. Intuitively, the alternating-color technique works as follows: We quantify a $q$-sequence that ``changes color" between $q$ and $\neg q$. Each change of color is used as a potential bound. Once a request occurs, the grant must be given withing two changes of color. Thus, the HyperQPTL formulation amounts to the following specifications, here exemplary for 2 clients, where we require promptness only for client 1.
\begin{align}	
	& \label{form:arbiter_mutual}\forall \pi \ldot \G \neg (g^1_\pi \land g^2_\pi) \\
	& \label{form:arbiter_eventually} \forall \pi \ldot \G (r^2_\pi \rightarrow \F g^2_\pi) \\
	& \label{form:arbiter_promptness} \exists q \ldot \forall \pi \ldot \G\F q \land \G \F \neg q \\
	& \nonumber \phantom{\exists q \ldot \forall \pi \ldot} \land \G (r^1_\pi \rightarrow (q \rightarrow (q \LTLuntil (\neg q \LTLuntil g^1_\pi))) \\
	& \nonumber \phantom{\exists q \ldot \forall \pi \ldot \land (r^1_\pi \rightarrow (}  \land (\neg q \rightarrow (\neg q \LTLuntil (q \LTLuntil g^1_\pi)))) \\
	& \label{form:arbiter_spurious}\forall \pi. (\neg g^1_\pi \LTLweakuntil r^1_\pi) \land (\neg g^2_\pi \LTLweakuntil r^2_\pi)
\end{align}
Formula~\ref{form:arbiter_mutual} states mutual exclusion. Formula~\ref{form:arbiter_eventually} states that client 2 must be served eventually (but not within a bounded number of steps). Formula~\ref{form:arbiter_promptness} states the promptness requirement for client 1. It quantifies an alternating $q$-sequence, which serves as a sequence of global bounds that must be respected on all traces $\pi$. Then, if client 1 poses a request, the grant must be given within two changes of the value of $q$. Formula~\ref{form:arbiter_spurious} is only added in benchmarks named arbiter-$k$-full-prompt. It specifies that no spurious grants should be given.

BoSy successfully synthesizes prompt arbiter of up to $3$ states. For a $4$-state prompt arbiter BoSy did not return in reasonable time.

\section{Conclusion}
We studied the hyperlogic HyperQPTL, which combines the concepts of trace relations and $\omega$-regularity. We showed that HyperQPTL is very expressive, it can express properties like \emph{promptness}, \emph{bounded waiting for a grant}, \emph{epistemic} properties, and, in particular, any $\omega$-\emph{regular} property. Those properties are not expressible in previously studied hyperlogics like HyperLTL. At the same time, we argued that the expressiveness of HyperQPTL is optimal in a sense that a more expressive logic for $\omega$-regular hyperproperties would have an undecidable model checking problem.
We furthermore studied the realizability problem of HyperQPTL. 
We showed that realizability is decidable for HyperQPTL fragments that contain properties like promptness.
But still, in contrast to the satisfiability problem, propositional quantification does make the realizability problem of hyperlogics harder. More specifically, the HyperQPTL fragment of formulas with a universal-existential propositional quantifier alternation followed by a single trace quantifier is undecidable in general, even though the projection of the fragment to HyperLTL has a decidable realizability problem.
Lastly, we implemented the bounded synthesis problem for HyperQPTL in the prototype tool BoSy. Using BoSy with HyperQPTL specifications, we have been able to synthesize several resource arbiters.
The synthesis problem of non-linear-time hyperlogics is still open. For example, it is not yet known how to synthesize systems from specifications given in branching-time hyperlogics like HyperCTL$^*$.


%
%
%
\bibliographystyle{splncs04}
\bibliography{main.bib}
\appendix
\end{document}